\newtheorem{theorem}{Theorem}
\newtheorem{proposition}{Proposition}
\newtheorem{definition}{Definition}
\newcommand{\citepeg}[1]{\citep[e.g.,][]{#1}}
\def\I{\mathbb{I}}
\def\wtilde{\widetilde}
\def\eqd{\stackrel{d}{=}}
\newcommand{\indep}{\ \rotatebox[origin=c]{90}{$\models$} \ }
\newcommand{\notindep}{\centernot{\indep}}
\def\maskf{\text{(mask,f)}}
\def\maskfi{\text{(mask,f)(i)}}
\def\maskm{\text{(mask,m)}}
\def\maskmi{\text{(mask,m)(i)}}
\def\i{\text{(i)}}
\def\fii{\text{f,(i)}}
\def\mii{\text{m,(i)}}
\def\B{\mathcal{B}}
\def\C{\mathcal{C}}
\def\Xk{\wtilde{X}}
\def\fixed{\mathcal{D}}
\title{Causal Inference in Genetic Trio Studies}
\author[1]{Stephen Bates}
\author[1]{Matteo Sesia}
\author[1,2]{Chiara Sabatti}
\author[1,3]{Emmanuel Cand\`es}
\affil[1]{Department of Statistics, Stanford University}
\affil[2]{Department of Biomedical Data Science, Stanford University}
\affil[3]{Department of Mathematics, Stanford University}
\date{\today}
\begin{document}

\maketitle

\begin{abstract}
We introduce a method to rigorously draw causal
inferences---inferences immune to all possible confounding---from
genetic data that include parents and offspring. Causal conclusions
are possible with these data because the natural randomness in meiosis
can be viewed as a high-dimensional randomized experiment. We make
this observation actionable by developing a novel conditional
independence test that identifies regions of the genome containing
distinct causal variants. The proposed {\em Digital Twin Test}
compares an observed offspring to carefully constructed synthetic
offspring from the same parents in order to determine statistical
significance, and it can leverage any black-box multivariate model and
additional non-trio genetic data in order to increase
power. Crucially, our inferences are based only on a well-established
mathematical description of the rearrangement of genetic material during meiosis
and make no assumptions about the relationship between the genotypes and phenotypes.
\end{abstract}

{\small 
{\bf Keywords:} trio, transmission disequilibrium test (TDT), family-based association test (FBAT), genome-wide association study (GWAS),
causal discovery, false discovery rate (FDR), conditional independence testing
}

\section{Introduction}
Over the last 20 years, genome-wide association studies (GWAS) have established the association of thousands of genetic markers with phenotypes of interest, and these studies have the potential to shed light on essential scientific questions and to enable personalized medical care \citep{Visscher2017}. Analyzing GWAS data with increasingly large sample sizes poses new statistical challenges, however; in this new regime, any statistical association between a genetic variant and phenotype will now be detectable, including many irrelevant associations arising from non-genetic factors such as differing environmental conditions.
While there are existing methods to mitigate this problem \citep{Devlin1999, Price2006,kang2010}, such methods are not guaranteed to remove it entirely, so, with large sample sizes, they may leave many detectable associations that do not represent interesting biological activity.
As such, a fundamental challenge for GWAS in the next decade is to
move from detecting promising associations to rigorously establishing
causality. On that premise, the most trustworthy way to ascertain that
a statistical association is causal is to use a randomized experiment
\citep{rubin1974}, and parent-offspring duo or trio data record such an experiment in the sense that the locations of the recombination points during meiosis are randomized by nature. Building on this, we propose a method that analyzes the placement of such sites to provably report only biologically meaningful regions of the genome.

\subsection{Our contribution}
In this work, we first articulate how causal inference is possible in the trio design and then introduce the {\em Digital Twin Test}: a method for finding causal regions that is immune to confounding variables. Our contribution has four components:

\begin{enumerate}
\item {\bf Establishing causality in the trio design} We formalize the existing notion that family studies are immune to population structure, showing how to leverage the trio design to make causal inferences in a rigorous statistical sense. We formulate a broad class of tests for the null hypothesis of no causal effect, which includes existing methods such as the transmission disequilibrium test (TDT) \citep{Spielman1993, Spielman1996} and some of its variations for quantitative traits.

\item {\bf A method for discovering causal regions}
We introduce the Digital Twin Test to make causal inferences using trio data, improving on several aspects of the existing methods.

\begin{enumerate}[(a)]
\item {\bf Identifying distinct causal regions} 
Our method provably localizes causal variants within explicit windows along the genome, clearly showing the user when there are distinct causal effects. By contrast, although it is not widely known, the TDT is testing a less exact chromosome-wide null, so spurious findings arise from correlations among sites on the chromosome---see Section~\ref{sec:trio_ld} for an example.

\item {\bf Testing multiple hypotheses}
 Our method deals with multiple comparisons in a precise way, controlling either the family-wise error rate or the false discovery rate (FDR) without the need for a conservative Bonferroni correction. The heart of our solution is the creation of {\em independent} p-values for distinct regions of the genome, which can then be used together with more powerful multiple testing procedures.

\item {\bf Leveraging black-box models and subject matter knowledge} 
Our method increases power by incorporating any multivariate model and subject matter information. Critically, the error rate guarantees of the method do not rely whatsoever on the correctness of the prior information or of the phenotype model. 
\end{enumerate}
\end{enumerate}
While our inferences are based on trio data, our method can take advantage of additional case-control or population GWAS data to greatly increase power {\em while retaining the certified causal inferences.}  Since trio samples are harder to collect than case-control or population samples, the techniques we describe do not replace existing methods for the latter, but instead work together with them to rigorously establish that the detected associations are due to causal variants. Lastly, we highlight that our approach is flexible and naturally applies to binary, quantitative, or time-to-onset response variables.

\subsection{Trios, causality, and confounding in the literature}

We begin by outlining related methods from the genetics literature.
In inheritance, each parent transmits one complete set of DNA to the
progeny, copying large continuous segments from that parent's two
strands with occasional switches. Geneticists have long exploited the
randomness in this process to identify meaningful associations
\citep{hrr1981, Falk1987, Ott2011}. The launching point for this work
is the TDT \citep{Spielman1993, Spielman1996}, which checks if a given
allele is inherited more or less frequently than would be expected due
to chance among the observations with a disease. If the transmission
frequency deviates from the baseline frequency, the TDT reports an
association.  Beyond the original TDT, additional techniques for more
complex, partially observed pedigrees \citep{Thomson1995,
  Rabinowitz2000, Lange2002b} and utilizing multiple markers
\citep{lazzeroni1998conditional, Xu2007, Rakovski2008} have been developed; these are known as
{\em family-based association tests}. These methods are robust both to
modeling assumptions about the relationship between the trait and the
genotypes and to confounding due to {\em population structure},
namely, the presence of subpopulations with different allele
frequencies \citepeg{Laird2010}. Moreover, these techniques can be
extended to address quantitative traits \citep{Allison1997,
  Rabinowitz1997, Monks2000}, although,
unlike the TDT, some of these methods require parametric assumptions
about the relationship between the phenotypes and genotypes
\citep{Ewens2008}. Similarly, the method of pseudocontrols
\citep{Schaid1996, Cordell2002} computes the likelihood of an observed
transmission pattern of a small number of alleles based on a posited
model for the genotypes and phenotypes. To address the multiple
comparisons issue arising from looking at several variants at
  once, \citep{VanSteen2005} shows how to decouple the selection of
promising markers from the final construction of a p-value from
family-based association tests.


Turning to the statistics literature, causal inference is concerned with understanding what would happen if one manipulated the system under study, such as if one changed the SNPs in a small region of the genome while leaving all other conditions the same. This was first explicitly formulated in the context of randomized experiments by Fisher and Neyman \citep{Fisher1925, splawa-neyman1990, fisher1935}, who
pointed out that randomization removes the effect of {\em
  confounders}, unmeasured variables that create dependence between
a covariate and the response even when directly changing the covariate would not affect the response. Therefore, any observed dependence in a randomized experiment must be due to a causal effect. 
This notion was later formalized for observational studies with the
potential outcomes framework \citep{rubin1974, rubin2005}  
and, alternatively, with structural equation models \citep{Bollen1989, Verma1990EquivalenceAS, pearl2009}. 
While there is a spirited debate in the research community about when each is more fruitful, in this work we will demonstrate that our causal inferences are valid simultaneously in both frameworks.

Within causal inference, the problem of learning the structure of the true underlying model from data is known as {\em causal discovery} \citepeg{Spirtes2000,Kalisch2007,maathuis2009}. General methods for causal discovery exist, although they typically require a large number of conditional independence tests and the assumption that such tests can be carried out without any statistical error \citep{Chickering2002, He2008}. As a result, finite-sample results are rare. This work also uses conditional independence testing as the foundation for causal discovery but builds upon the conditional randomization test \citep{candes2016} to give finite-sample statistical guarantees. Our approach is also related to that of knockoffs \citep{barber2015, candes2016} which also provides finite-sample statistical guarantees and has been successfully deployed to analyze GWAS data \citep{Sesia2019a, sesia2019rejoinder, Sesia631390}, although the connection with causal discovery was not previously developed.

Lastly, we pause to clarify the relationship of this work with Mendelian randomization and fine-mapping. First, Mendelian randomization \citep{katan1986,Smith2003} is a technique that uses a pre-specified genetic variant as an instrumental variable in order to determine the magnitude of the causal effect of an exposure (e.g., smoking) on an outcome (e.g., lung cancer). By contrast, here we are instead interested in identifying which of many genomic regions have a causal effect on an outcome.
Second, methods for {\em fine-mapping} seek to reduce a large region of GWAS detections to a smaller set that is expected to contain the causal variant driving the association \citepeg{Schaid2018}. There, the term ``causal variant'' indicates that one is operating at a high-resolution, whereas the current work is concerned with establishing that associations are truly biological and are not created by confounding.

\section{Causal inference in the trio design}
\subsection{Setting}
\label{sec:setting}
Human cells have 46 DNA strands organized into 23 chromosome pairs; in
each pair, one complete strand is inherited from the mother and one
is from the father. In this work, we consider the case where we
measure {\em single-nucleotide polymorphisms} (SNPs), sites on the
genome where two possible alleles occur in the population,
encoded as 0 or 1. The set of observed alleles for one entire strand
is known as a {\em haplotype}. We consider the case where the
haplotypes of $n$ subjects and their biological parents at $p$ sites
are known, denoted as follows:  
\[
  \begin{array}{rrr}  
\text{subjects' haplotypes} \qquad & (X_1^m, \dots, X_p^m) \in
                                     \{0,1\}^{n \times p}, & (X_1^f, \dots, X_p^f)  \in \{0,1\}^{n \times p}; \\
\text{mothers' haplotypes} \qquad & (M_1^a, \dots, M_p^a) \in
                                    \{0,1\}^{n\times p}, & (M_1^b, \dots, M_p^b) \in \{0,1\}^{n \times p}; \\
\text{fathers' haplotypes} \qquad & (F_1^a, \dots, F_p^a) \in
                                    \{0,1\}^{n \times p}, & (F_1^b, \dots, F_p^b)
                                                 \in \{0,1\}^{n \times
                                                 p}.
                                                 \end{array}
\]
For convenience we denote the matrix of offspring {\em genotypes} as
$X  = X^m + X^f$, the $j$-th column of $X$ as $X_j$, the $i$-th row of
$X$ as $X^\i$ (this is the $i$-th individual), and the set of all ancestral haplotypes as $A = (M^a, M^b, F^a, F^b)$.

Our method takes the haplotypes as given, even though  typically only the genotypes are directly measured in a GWAS study. Haplotypes are then reconstructed algorithmically through {\em phasing} \citep{Browning2011}. While experimental techniques are being developed to directly measure haplotypes, these are not yet widespread. We instead take the phased haplotypes as a reasonable approximation, since phasing is known to be accurate with family data \citep{Browning2011, Marchini2006, OConnell2014}. In a simulation with a synthetic population with known ground-truth haplotypes, we find that our method performs identically with known haplotypes and computationally phased haplotypes; see Appendix~\ref{sec:sim_phasing_robustness}. 

Crucially, the distribution of the offspring genotypes $X$ conditional
on the parental haplotypes $A$ is known
\citep{haldane1919combination}.  Informally, the model for a single
offspring is this: for the haplotype $X^m$ inherited from the mother,
the SNP $X^{m}_{j}$ is inherited either from $M^{a}_j$ or from
$M^{b}_j$, with equal probability. Furthermore, long continuous blocks
of $X^{m}$ are jointly inherited either from $M^{a}$ or $M^{b}$, with
occasional switches at {\em recombination} sites; see Figure
\ref{fig:recom-viz} for an illustration and
Section~\ref{sec:recomb_model} for a formal description of this
process. Throughout, we will leverage our knowledge of the
recombination mechanism in order to carry out hypothesis tests.

\begin{figure}
\begin{center}
\includegraphics[page = 1, width = 4in, trim = 0 20cm 0 0]{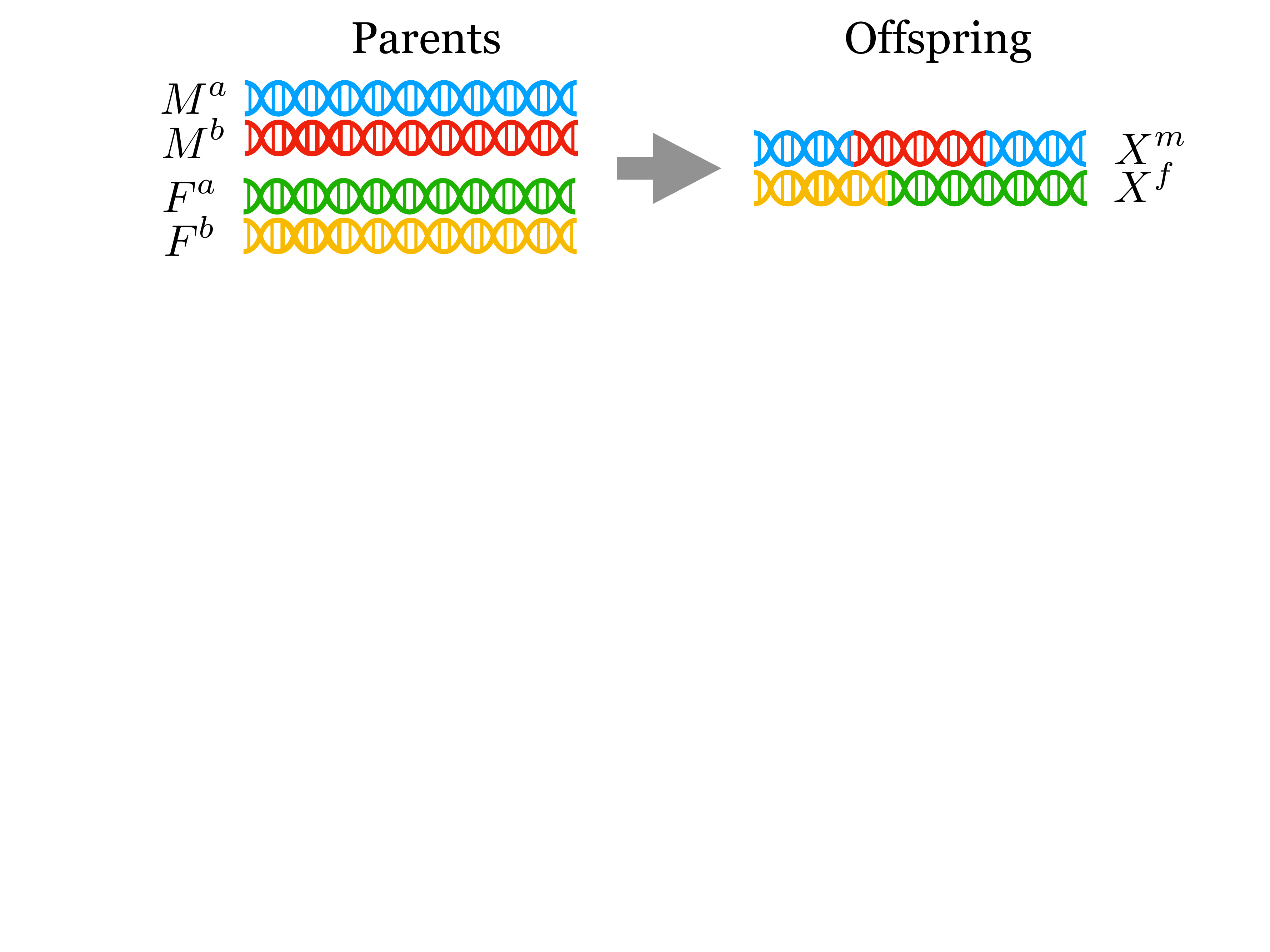}
\end{center}
\caption{A visualization of the process of recombination on a single chromosome.}
\label{fig:recom-viz}
\end{figure}

\subsection{Establishing causality}
\label{sec:causality}


We now demonstrate that it is possible to draw causal inference from trio data by formulating the inheritance process as a high-dimensional randomized experiment. The main idea is to condition on the parental haplotypes: once these are fixed the remaining randomness in meiosis is due to cell-level biological processes that are independent of possible confounders, so the resulting inferences are immune to these factors.  

We begin with a concrete example of confounding in GWAS. Suppose we
have a genetic study involving two populations, such as Europeans and
East Asians, and we wish to study whether a SNP $X_j$ affects the cholesterol level $Y$. Next, suppose that the distribution of cholesterol levels differs in the two populations, and furthermore that the distribution of SNP $X_j$ also differs. As a result, there is a valid statistical association between $X_j$ and $Y$, but this statistical association may or may not represent a causal effect. That is, if we manipulated the SNP $X_j$, it may or may not change the cholesterol levels of the subjects. The association could instead be the result a confounder; for example, if the consumption of dairy products leads to higher cholesterol and Europeans consume more dairy products on average, then the SNP $X_j$ will be associated with $Y$, even when it has no causal effect. Since randomized experiments only detect causal effects, to circumvent the above problem we could, in principle, flip a fair coin, set the value of $X_j$ accordingly, and then check for an association with $Y$. While we of course do not carry out such experiments on people, we can exploit a similar experiment occurring in nature.

Formally, let $B \mid C$ denote the distribution of a random variable $B$ given the observed value of a random variable $C$, and let $B \indep C \mid D$ denote that $B$ is conditionally independent of $C$ given $D$. Consider testing the null hypothesis that a SNP $X_j$ is independent of the response $Y$ after conditioning on the parental haplotypes:
\begin{equation}
H_0 : X_j \indep Y \mid A.
\label{eq:cond_null}
\end{equation}
Next, let $Z$ be a potential confounder, such as dairy consumption above. The critical observation is that for essentially all possible confounders of concern in genetic studies, the distribution of the genotype at site $j$ given the parental haplotypes does not change given knowledge of $Z$, since the randomness in inheritance is a result only of random biological processes independent of $Z$. To make this precise, we define the following set of possible confounders.
\begin{definition}[External confounder]
\label{def:ext_confounder}
We say that a random variable $Z$ is an {\em external confounder} if the distribution of the offspring's haplotypes given the parental haplotypes does not change given knowledge $Z$:
\begin{equation}
\label{eq:external_confounder}
X \mid (A, Z = z) \quad \eqd  \quad X \mid (A, Z = z') \quad \text{ for any } z \text{ and } z'.
\end{equation}
\end{definition}
The relation in \eqref{eq:external_confounder} is true for the offspring's dairy consumption, for example, as well as for all environmental conditions occurring after conception. This relationship implies that $Z$ is independent of the offspring's SNP $j$ given the parental haplotypes:
\begin{equation*}
Z \indep X_j \mid A.
\end{equation*}
This then implies that if there is an association between $Y$ and $X_j$ after conditioning on the parental haplotypes, then the association is not due to the confounder $Z$:
\begin{equation*}
Y \notindep X_j \mid A  \implies Y \notindep X_j \mid (A, Z).
\end{equation*}
Returning to the language of hypothesis testing, this proves that if we test the null hypothesis in \eqref{eq:cond_null}, we automatically account for the random variable $Z$. We record this fact formally:
\begin{theorem}[Conditioning on parents accounts for external confounders]
\label{thm:causality}
Let $Z$ be an external confounder, i.e., $Z$ satisfies the relation in
\eqref{eq:external_confounder}. Then, any valid test of the null
hypothesis in \eqref{eq:cond_null} is also a valid test of the stronger null hypothesis 
\begin{equation}
\label{eq:dtt_conf_null}
H'_0 : Y \indep X_j \mid (A, Z)
\end{equation}
that accounts for the confounder $Z$.
\end{theorem}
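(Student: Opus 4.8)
The plan is to reduce the statement to a set-containment fact about null hypotheses and then apply the definition of a valid test. Specifically, I would view $H_0$ and $H'_0$ as collections of joint distributions for $(X, A, Y, Z)$ and show that $H'_0 \subseteq H_0$: every data-generating distribution that obeys the external-confounder condition \eqref{eq:external_confounder} and also satisfies $Y \indep X_j \mid (A, Z)$ necessarily satisfies $X_j \indep Y \mid A$. Once this containment is in hand the theorem is immediate, because a test is ``valid at level $\alpha$'' for a null hypothesis exactly when its rejection probability is at most $\alpha$ under every distribution in that null; a valid test of $H_0$ therefore controls its error under all of $H_0$, and hence \emph{a fortiori} under the smaller set $H'_0$ (and identically in the p-value formulation, where one asks that the p-value be stochastically larger than uniform under every distribution in the null).

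To prove the containment I would first extract from Definition~\ref{def:ext_confounder} the conditional independence $X_j \indep Z \mid A$. The relation \eqref{eq:external_confounder} says that the conditional law of $X$ given $(A,Z)$ does not depend on the value of $Z$, which is precisely the statement $X \indep Z \mid A$; applying symmetry and the decomposition property of conditional independence to pass from the full vector $X$ to its $j$-th coordinate yields $X_j \indep Z \mid A$. I would then combine this with the hypothesis $H'_0 : X_j \indep Y \mid (A, Z)$ via the \emph{contraction} property: from $X_j \indep Z \mid A$ and $X_j \indep Y \mid (A, Z)$ one obtains $X_j \indep (Y, Z) \mid A$, and a further application of decomposition gives $X_j \indep Y \mid A$, i.e.\ $H_0$. (This is just the formal version of the contrapositive of the implication $Y \notindep X_j \mid A \implies Y \notindep X_j \mid (A, Z)$ noted in the discussion preceding the theorem.)

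As for difficulty, there is no substantive obstacle; the only point requiring a little care is invoking contraction and decomposition at the right level of generality, since $Z$ is allowed to be an arbitrary, possibly continuous, random variable rather than a discrete covariate. This is not actually a concern: symmetry, decomposition, weak union, and contraction are the \emph{semi-graphoid} axioms and hold for conditional independence of arbitrary random elements with no regularity or positivity assumptions (only the intersection rule, which the argument does not use, needs positivity). The remainder of the proof is then bookkeeping: state what ``valid test'' means, observe $H'_0 \subseteq H_0$, and conclude.
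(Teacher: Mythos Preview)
Your proposal is correct and follows essentially the same route as the paper: the paper's argument (given in the text preceding the theorem and repeated explicitly in the proofs of Propositions~\ref{prop:chromosome_dtt_causality} and~\ref{prop:ldtt_causal_validity}) derives $Z \indep X_j \mid A$ from \eqref{eq:external_confounder}, combines this with $Y \indep X_j \mid (A,Z)$ to obtain $(Y,Z) \indep X_j \mid A$ and hence $Y \indep X_j \mid A$, and then observes that a valid test of the latter null is automatically valid for the former. Your framing via the set containment $H'_0 \subseteq H_0$ and your explicit naming of the semi-graphoid axioms (decomposition, contraction) make the argument a touch more formal, but the logical content is identical.
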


In words, if we test the hypothesis in \eqref{eq:cond_null}, which is possible based on observed trio data, then we have perfectly adjusted for the confounder $Z$, {\em even if it is not specified or measured in the data}. Thus, if we reject the null in \eqref{eq:cond_null}, it cannot be the case that $X_j$ and $Y$ are dependent due to an external confounder $Z$. We emphasize that rejecting a test of the null hypothesis in \eqref{eq:cond_null} does not yet imply that $X_j$ itself is the causal SNP, since the dependence between $X_j$ and $Y$ may be the result of a causal SNP elsewhere on the chromosome, but it does imply that there is an association on the chromosome that is not the result of external confounding. We will show how to explicitly localize the causal regions later in Section \ref{sec:localization}.

\subsection{Connection to structural equation modeling}

Thus far, we have made statements about conditional independence
relationships without explicitly using existing formalisms for causal
inference. Here, we formulate our results as a structural equation
model to make the connection with the existing literature explicit,
and we similarly formulate our results in the potential outcomes
framework in Appendix~\ref{app:potential_outcomes}.

Consider a structural equation model involving the variables $A, X, Y$ and the external confounder $Z$. For a response $Y$, we assume that $X$ can only cause $Y$ and not the reverse, which is reasonable because a subject's genotype is fixed after conception. We further know that the parental haplotypes $A$ cause $X$ and not the reverse. We also assume that $Z$ causes $Y$ since the reverse case does not result in confounding. Finally, by definition the external confounder $Z$ is conditionally independent of $X$ given $A$, which implies that there is no causal effect from $X$ to $Z$. The corresponding structural equation model is
\begin{align*}
(A, Z) = f_{AZ}(N_{AZ}), \qquad
X = f_X(A, N_X), \qquad
Y = f_Y(X, Z, N_Y), 
\end{align*}
where $f_{AZ}, f_X$, and $f_Y$ are fixed functions and $N_{AZ}, N_X$ and $N_Y$ are independent uniform $[0,1]$ random variables; see Figure \ref{fig:causal_diagrams} for a graphical representation. Within this model, rejecting the hypothesis in \eqref{eq:cond_null} implies that there is a causal effect from $X$ to $Y$ (although not necessarily from the specific SNP $X_j$ to $Y$). In this sense, testing the hypothesis in \eqref{eq:cond_null} is a formal causal inference. Crucially, our proposed method will test this hypothesis without the need to specify or restrict $f_{AZ}$ or $f_Y$.
 
\begin{figure}
	\centering
	\begin{subfigure}[t]{.3\textwidth}
		\centering
        \includegraphics[page=2, width = 1in, trim = 0 500 800 0, clip]{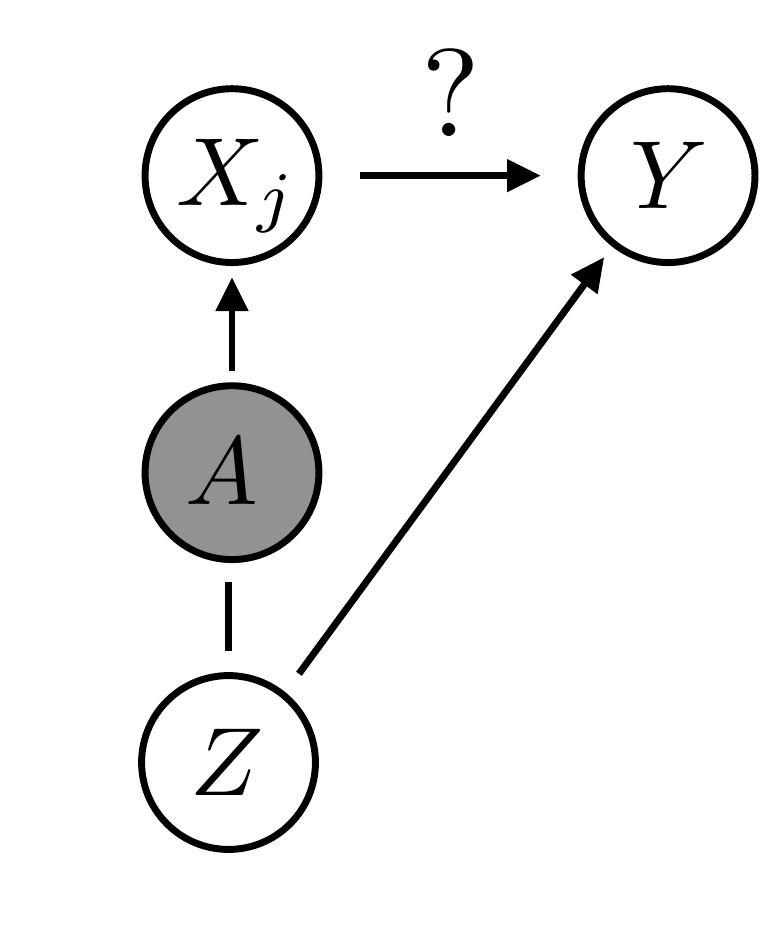}
        \caption{}
        \label{fig:confounder_dag}
    \end{subfigure}
    ~
    \begin{subfigure}[t]{.3\textwidth}
    	\centering
        \includegraphics[page=1, width = 1in, trim = 0 500 800 0, clip]{screening_simple.pdf}
        \caption{}
        \label{fig:ancestry_dag}
    \end{subfigure}
    
    \caption{A graphical depiction of the causal argument in Section \ref{sec:causality}. Panel (a) shows that the random variable $Z$ can create an association between $X_j$ and $Y$, even if there is no causal effect. Panel (b) shows that conditional on the parental haplotypes $A$, the external confounder $Z$ is independent of the offspring's genotype $X_j$. As a result, $Z$ cannot be responsible for the remaining association between the genotype $X_j$ and the trait $Y$.}
    \label{fig:causal_diagrams}
\end{figure}

\subsection{Discussion of possible confounders}
Virtually all confounders of concern in genetic studies do not affect the transmission of the genetic information from parents to offspring, and are thus external confounders which are correctly accounted for in the trio design by Theorem~\ref{thm:causality}. We list the most important examples below.
\begin{itemize}
\item {\bf Environmental conditions after conception} The mechanism for producing $X$ from $A$ is unaffected by anything occurring after conception.
\item {\bf Population structure, ethnic composition, and geographic location} The mechanism for producing $X$ from $A$ does not change with subpopulation information, ethnicity, or geographic location.
\item {\bf Cryptic relatedness} The presence of distantly related individuals in a sample does not change the distribution of $X$ given $A$, even if this relatedness is unknown and unspecified.
\item {\bf Family effects, altruistic genes} Information about the quality of the environment caused by parental behavior does not impact the distribution of $X$ given $A$.
\item {\bf Assortive mating} Tests of \eqref{eq:cond_null} condition on the observed mating pattern, making them immune to this form of confounding.
\end{itemize}

By contrast, the following are not external confounders:
\begin{itemize}
\item {\bf Germline mutations} A few environmental factors of the parents can affect
  the inheritance process, such as the exposure of a parent to
  radiation, which changes the distribution of the offspring by
  increasing the frequency of mutations. While this does affect the
  model for inheritance in principle, we do not expect this to practically invalidate tests of the null in \eqref{eq:cond_null}.
  In any case, this is a narrow set of possible confounders.
\item {\bf Unmeasured SNPs} In typical studies, only a subset of SNPs are sequenced. Knowledge of a subject's unmeasured SNPs gives additional information about the distribution of $X$ given $A$.
Later in this work, we will show how to localize the unmeasured causal SNPs.
\end{itemize}

Soon, in Section~\ref{sec:global_dtt}, we develop a family of tests of the null in \eqref{eq:cond_null}, but we now pause for two additional remarks about our inferences. First, the reader may wonder how mutations that occur later in a subject's life, such as mutations caused by sun or radiation exposure, fit with the above exposition. Such mutations originate in a single cell, and only descendant cells carry the mutation. As a result, any mutations not present at the earliest stages of development will only be present in a very small minority of a subject's cells. Genotyping techniques use the consensus of measurements from a large number of cells, and so they essentially measure the genotype at conception.

Second, there is potential for selection bias in all genetic studies,
since some individuals are more likely to be included in a sample than
others. Tests of the null in \eqref{eq:cond_null} remove the effect of selection bias, except for
that directly caused by the offspring's genetic composition. Any
potential selection bias due to external confounders, such as
geographic location, is automatically accounted for by
Theorem~\ref{thm:causality}. However, if a SNP $X_j$ causally
influences the probability of inclusion in a study, then such a SNP is
not null according to our null hypothesis in $\eqref{eq:cond_null}$,
so it may be detected by such a test.

\subsection{The randomness in inheritance}
\label{sec:recomb_model}
In preparation for a test of hypothesis in \eqref{eq:cond_null}, we now formally describe the distribution of offspring's genotype given the parental haplotypes. The biological mechanism governing inheritance is well-understood, and this will serve as the backbone of our statistical inference. In particular, the process by which a subject's two haplotypes arise from the parental haplotypes was formalized by Haldane as a hidden Markov model (HMM) \citep{haldane1919combination}. Without loss of generality, we describe the model for a single observation on one chromosome; in the general case, each chromosome of each observation is an independent instance of this model. For concreteness, we focus on $X^m$.

\paragraph{Latent Markov chain} Let the random vector $U^m \in \{a, b\}^{p}$ indicate the following:
\begin{align*}
U^{m}_j &= \begin{cases}
a & \text{ if site $j$ is copied from the mother's `a' haplotype}, \\
b & \text{ if site $j$ is copied from the mother's `b' haplotype}.
\end{cases} \\
\end{align*}
Our model is that $U^m$ is distributed as a Markov chain, where
\begin{align*}
P(U^{m}_1 = a) &= \frac{1}{2},
\end{align*}
and
\begin{align*}
P(U^{m}_j = u^{m}_{j-1} \mid U^{m}_{1:(j-1)} = u^m_{1:(j-1)}) &= P(\text{even} \overbrace{\text{ \# of recombinations between $j-1$ and $j$}}^{\text{modeled as Poisson with mean $d_j$}}) \\
&= \frac{1}{2} (1 + e^{- 2 d_j}).
\end{align*}
Here, $d_j$ is the {\em genetic distance} between SNPs $j-1$ and $j$, which is fixed and known.
Note that the genetic distance is not always proportional to the physical distance due to recombination hotspots: regions that have more frequent recombination events \citep{altshuler2005haplotype, Bherer2017}.

\paragraph{Emission distribution} 
Conditional on $U^m$, each $X_j^m$ is independently sampled from
$$P(X^{m}_j = M^{(u^m_j)}_j \mid U^{m}_j = u^{m}_j) = 1 - \epsilon.$$
Here, $\epsilon$ is the probability of a de novo mutation, which for humans is about $1\cdot10^{-8}$ \citep{Acuna-Hidalgo2016}. The analogous HMM describes the distribution of $X^f$ given $F^a$ and $F^b$, which is taken to be independent of $X^m$ given $A$.

\section{Testing the global causal null}
\label{sec:full_chrom_testing}

The remainder of this work develops randomization tests to address two questions of interest.
First, we wish to determine whether a trait has any genetic basis, that is, to test the global null. Second, we wish to find regions of the genome that contain causal variants. 
To clarify the relationship between these questions and our choice of null hypothesis, consider $H_0$ in \eqref{eq:cond_null}. This hypothesis probes the association between the trait of interest and SNP $X_j$, without considering the role of other genetic variants. It is non-null either when $X_j$ is directly related to the trait or when $X_j$ is merely associated with a causal genetic variant. Therefore, to establish that $X_j$ itself is the causal variant, we additionally need to ensure that the association between $X_j$ and $Y$ is not due to any other genetic variables. One removes these possible confounders by conditioning on them, testing instead hypotheses that check the dependence of $X_j$ and $Y$ after accounting for the rest of the genome \citep{candes2016,Sesia631390}. We will describe how to do this in detail soon, but we start by noting that once we condition on the parental haplotypes, the SNPs on different chromosomes are independent. As a result, in the trio design even global null tests can localize signals to a single chromosome.  

\subsection{The full-chromosome Digital Twin Test}
\label{sec:global_dtt}
We begin by testing the a full-chromosome global null.
Let $\C \subset \{1,\dots,p\}$ be the set of SNPs on the given chromosome. We now consider the hypothesis
\begin{equation}
\label{eq:composite_cond_null}
H_0 : X_\C \indep Y \mid A,
\end{equation}
where $X_\C$ denotes $(X_j)_{j \in \C}$. 
The {\em Digital Twin Test} of the null hypothesis in \eqref{eq:composite_cond_null}---presented in Algorithm~\ref{alg:global_dtt}---compares an observed offspring to synthetic offspring from the same parents (the ``digital twins''; see Figure \ref{fig:digital_twin_viz}) using an arbitrary measure of feature importance or predictive accuracy.\\
\begin{algorithm}[H]
\SetAlgoLined
\DontPrintSemicolon
--- Compute the test statistic on the true data: $$t^* = T((X_{-\C}, X_\C), Y).\;$$
\For{$k = 1, \dots, K$}{
	--- Sample the digital twins $\wtilde{X}^m_\C$ and $\wtilde{X}^f_\C$ from the distribution from Section \ref{sec:recomb_model}, independent of $Y$ (see Appendix~\ref{sec:sampling_digital_twins} for an explicit sampler). \vspace{.05in}\;
	--- Compute the test statistic using the digital twins: $$t_k = T((X_{-\C}, \Xk^m_\C + \Xk^f_\C), Y).\;$$
}
--- Compute the quantile of the true statistic $t^*$ among the digital twin statistics $t_1, \dots, t_K$:
$$v = \frac{1 + \#\{k: t^* \le t_k\}}{ K+ 1}.\;$$

\KwResult{A p-value $v$ for the conditional hypothesis in \eqref{eq:composite_cond_null}.}
\caption{The Digital Twin Test}
\label{alg:global_dtt}
\end{algorithm}
In Algorithm~\ref{alg:global_dtt}, $X_{-\C}$ denotes
$(X_j)_{j \notin \C}$, and $T(\cdot)$ can be any statistic. This
algorithm returns a p-value, and the corresponding level $\alpha$
hypothesis test rejects when this p-value is less than
$\alpha$. We record formally that this test accounts for external confounders; see Appendix~\ref{app:proofs} for all proofs.

\begin{figure}
\begin{center}
\includegraphics[page = 2, width = 4in, trim = 0 20cm 0 0]{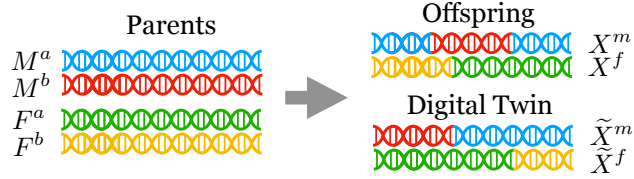}
\end{center}
\caption{A visualization of a digital twin.}
\label{fig:digital_twin_viz}
\end{figure}

\begin{proposition}[The Digital Twin Test is immune to external confounders]
\label{prop:chromosome_dtt_causality}
The Digital Twin Test is a valid test of the hypothesis 
\begin{equation}
H_0^\C : Y \indep X_\C \mid (A, Z), 
\label{eq:chromosome_conf_null}
\end{equation}
where $Z$ is any (possibly unknown) external confounder. That is, if the null hypothesis in \eqref{eq:chromosome_conf_null} holds, then for all $\alpha \in (0,1)$, the output $v$ of Algorithm~\ref{alg:global_dtt} obeys
\begin{equation*} 
P(v \leq \alpha) \le \alpha.
\end{equation*}
\end{proposition}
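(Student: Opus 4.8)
The plan is to reduce the statement, via Theorem~\ref{thm:causality}, to the validity of Algorithm~\ref{alg:global_dtt} as a test of the unconfounded null \eqref{eq:composite_cond_null}, and then to recognize the Digital Twin Test as a conditional randomization test, whose validity follows from an exchangeability argument. For the first step, Theorem~\ref{thm:causality} applies verbatim with the block $X_\C$ in place of the single SNP $X_j$: the external-confounder relation \eqref{eq:external_confounder} concerns the whole vector $X$, hence implies $Z \indep X_\C \mid A$, and therefore any valid test of $H_0: X_\C \indep Y \mid A$ is automatically a valid test of $H_0^\C : Y \indep X_\C \mid (A,Z)$. So it suffices to show that, whenever \eqref{eq:composite_cond_null} holds, the output satisfies $P(v \le \alpha) \le \alpha$ for all $\alpha$.

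Second, I would set up the randomization structure. Write $X_\C^{(0)} := X_\C$ for the observed offspring genotypes on $\C$ and $X_\C^{(1)}, \dots, X_\C^{(K)}$ for the $K$ digital-twin genotype vectors, so that $t_k = T((X_{-\C}, X_\C^{(k)}), Y)$ for $k = 0, \dots, K$ (with $t^* = t_0$): each $t_k$ is the \emph{same} deterministic map $x \mapsto T((X_{-\C}, x), Y)$ applied to $X_\C^{(k)}$, and $v$ is a permutation-invariant, monotone function of $(t_0,\dots,t_K)$. Hence it is enough to show that $(X_\C^{(0)}, X_\C^{(1)}, \dots, X_\C^{(K)})$ is exchangeable conditionally on $(A, X_{-\C}, Y)$; then the conditional rank of $t_0$ among the $t_k$ is stochastically at least a uniform draw on $\{1,\dots,K+1\}$ (accounting conservatively for ties through the ``$\le$'' in the definition of $v$), which yields $P(v \le \alpha) \le \lceil \alpha(K+1)\rceil/(K+1) \le \alpha$ by the standard argument and, after integrating out the conditioning variables, unconditionally.

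Third, the exchangeability. Once we condition on $(A, X_{-\C}, Y)$, the twins $X_\C^{(1)}, \dots, X_\C^{(K)}$ are i.i.d.\ draws from $\mathcal{L}(X_\C \mid A)$, because the sampler of Appendix~\ref{sec:sampling_digital_twins} draws them i.i.d.\ from the recombination model of Section~\ref{sec:recomb_model} independently of $(X_{-\C},Y)$ given $A$. So exchangeability reduces to showing the observed $X_\C^{(0)}$ also has conditional law $\mathcal{L}(X_\C \mid A)$ given $(A, X_{-\C}, Y)$, i.e.
\begin{equation*}
X_\C \indep (X_{-\C}, Y) \mid A .
\end{equation*}
This is where the genetic model and the null hypothesis combine. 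The model of Section~\ref{sec:recomb_model} runs an independent HMM on each chromosome given $A$, so in the structural model of Section~\ref{sec:causality} we may write $N_X = (N_X^\C, N_X^{-\C})$ with $X_\C = g(A, N_X^\C)$, where the fresh recombination/mutation noise $N_X^\C$ is independent of $(A, N_X^{-\C}, Z, N_Y)$; consequently $X_\C$ is conditionally independent given $A$ of any variable built from $(N_X^{-\C}, Z, N_Y)$. The vector $X_{-\C}$ is such a variable, giving $X_\C \indep X_{-\C} \mid A$; and under the null \eqref{eq:composite_cond_null} the argument of $Y$ that runs through $X_\C$ is inert, so $(X_{-\C}, Y)$ can be taken to be such a variable as well, yielding the displayed joint conditional independence.

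The hard part is precisely this last implication in the third step. The conjunction of the two ingredients taken in isolation --- the genetic fact $X_\C \indep X_{-\C}\mid A$ and the stated null $X_\C \indep Y \mid A$ --- is \emph{not} by itself enough to force the joint statement $X_\C \indep (X_{-\C}, Y)\mid A$ (one can cook up distributions where each pairwise independence holds but the joint one fails). The extra leverage must come from the trio model's generative structure: $X_\C$ is produced from $A$ by noise that does not feed into $X_{-\C}$, nor --- under the null, which asserts $X_\C$ has no causal effect on $Y$ --- into $Y$ beyond what is already mediated by $X_{-\C}$. Making that passage rigorous (in whichever formalism, SEM or potential outcomes, one adopts, mirroring the two formulations in Section~\ref{sec:causality} and Appendix~\ref{app:potential_outcomes}), and checking that the twin sampler indeed delivers draws that are i.i.d.\ $\mathcal{L}(X_\C\mid A)$ and independent of $(X_{-\C},Y)$ given $A$, are the steps that need care; everything else is bookkeeping around the usual rank computation for randomization p-values.
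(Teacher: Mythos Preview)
Your proposal follows the same route as the paper. The paper's proof reduces $H_0^\C$ to the unconfounded null $H_0': Y\indep X_\C\mid A$ by exactly the contraction/decomposition chain you describe in your first step (explicitly: $Y\indep X_\C\mid(A,Z)$ together with $Z\indep X_\C\mid A$ gives $(Y,Z)\indep X_\C\mid A$, hence $Y\indep X_\C\mid A$), and then dispatches validity under $H_0'$ in a single sentence by recognizing Algorithm~\ref{alg:global_dtt} as an instance of the conditional randomization test of \cite{candes2016}. Your second and third steps simply unpack what the paper cites away.

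Where you go beyond the paper is in isolating what the exchangeability argument actually requires: the \emph{joint} conditional independence $X_\C\indep(X_{-\C},Y)\mid A$, which, as you correctly note, is not forced by the two pairwise statements $X_\C\indep X_{-\C}\mid A$ and $X_\C\indep Y\mid A$ taken separately. One caution on your proposed SEM resolution, though: the conditional-independence null $X_\C\indep Y\mid A$ does not, by itself, make ``the argument of $Y$ that runs through $X_\C$ inert.'' Take $X_\C,X_{-\C}$ to be independent fair bits given $A$ and set $Y=X_\C\oplus X_{-\C}$; then $X_\C\indep Y\mid A$ holds, yet $f_Y$ genuinely depends on $X_\C$ and the joint independence you need fails. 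So your third step, as written, does not quite close the gap you have carefully identified---though, to be fair, the paper's own proof does not address this point either, absorbing it entirely into the CRT citation.
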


The Digital Twin Test is a natural randomization test in a trio design because it replicates the random mechanism generating the data. By contrast, a traditional permutation test can only test for a marginal association between $X_j$ and $Y$, which does not lead to valid causal inferences; see Appendix~\ref{sec:perm_failure}.

Many existing tests fall into this family. Notably, the TDT is a special case of the Digital Twin Test with the test statistic
\begin{equation}
\label{eq:tdt_test_stat}
T^{\text{(TDT)}}(X) = \sum_{i = 1}^n X_j^{\i} \I_{\{Y_i = 1\}},
\end{equation}
where $\I_B$ denotes the indicator of event $B$, except that the
Digital Twin Test uses an exact, finite-sample rejection threshold,
whereas the TDT uses an asymptotic approximation. Similarly, the
quantitative TDTs \citep{Rabinowitz1997} and \citep{Monks2000} are
also special cases of the above procedure. Moreover, the Digital Twin
Test can exploit arbitrary black-box machine learning models, such as
deep neural networks, gradient boosting, random forests, and penalized
regression to form a test statistic $T(\cdot)$ that incorporates
information from multiple sites in a data-driven way; see
\eqref{eq:test_stat} below for a concrete example. More sophisticated
models can explain away more of the variation in the phenotype,
leading to more sensitive tests. Thus we can see that the Digital Twin
Test framework unifies many existing procedures while
incorporating varying disease models, subject matter knowledge,
fitting algorithms, principal component corrections, screening and
replication, and so on, without requiring a new mathematical analysis
for each case.
While well-chosen models will lead to more powerful tests, we
emphasize that the validity of the automatic, finite-sample inference does not depend on the correctness of the chosen model.

\subsection{Incorporating external GWAS data}
The Digital Twin Test can also leverage large external GWAS data sets that do not contain trio observations to increase power. This is important because such data sets are common and will typically have larger sample sizes than trio data sets. To this end, we can use the external GWAS to find a powerful test statistic $T(\cdot)$, as suggested by \citep{Tansey2018}. For example, suppose we fit a penalized linear or logistic regression model on the external GWAS data to obtain an estimated coefficient vector $\hat{\beta}$. Then, on the trio data, we can use the Digital Twin Test with test statistic
\begin{equation}
\label{eq:test_stat}
	T(X, Y) = \begin{cases} 
	-\sum_{i=1}^n (\hat{\beta}^\top X^\i - Y_i)^2 & \text{ for real-valued $Y$}, \\
	-\sum_{i=1}^n - Y_i \log(\frac{e^{\hat{\beta}^\top X^\i}}{1 + e^{\hat{\beta}^\top X^\i}}) - (1 - Y_i) \log(\frac{1}{1 + e^{\hat{\beta}^\top X^\i}})& \text{ for binary $Y$}
	\end{cases}
\end{equation}
(the negative squared loss and the negative
logistic loss).
In words, the Digital Twin Test with this test statistics is asking, ``are the residuals smaller when I use the real genotypes to predict the response, compared to when I use the digital twin genotypes to predict the response?'' If the residuals are systematically smaller, it must be because of a causal effect, and the Digital Twin Test rejects the null hypothesis. 
We will explore this approach in simulation in Section~\ref{sec:simulations}. 

\subsection{Parent-offspring duos and other pedigrees}
The Digital Twin Test can also be applied to offspring for whom only one parent is genotyped, with a small adjustment. The modification is simple: whenever a parent is unknown the algorithm fixes the offspring's haplotype from that parent. For example, if $F^a$ and $F^b$ are unknown then in Algorithm~\ref{alg:global_dtt} we set $\wtilde{X}_g^f = X_g^f$ in each iteration of the loop. An analogous version of Theorem~\ref{thm:causality} continues to hold in this setting, and so we still detect only causal regions.\footnote{We simply condition on $X^f$ rather than $F^a$ and $F^b$ for each unit where the father's haplotypes are unknown, and so on.}

Furthermore, the Digital Twin Test can be applied to data with a variety of pedigree structures. One can select any set of duos or trios from the pedigree, with the restriction that no offspring in a duo or trio is an ancestor of any other offspring in a duo or trio. Even though individuals may be related, the locations of their recombination sites are independent. Moreover, only the parent-offspring relations need to be known, not the full pedigree. 

\section{Localizing causal regions}
\label{sec:localization}

\subsection{Linkage disequilibrium in the trio design}
\label{sec:trio_ld}
SNPs on the same chromosome are dependent, a phenomenon referred to as {\em linkage disequilibrium} (LD), so the Digital Twin Test and TDT can only rigorously test the full-chromosome causal null hypotheses discussed in Section~\ref{sec:full_chrom_testing}. Before presenting in this section a version of the Digital Twin Test to localize causal SNPs, we highlight the current limitation with a practical example.

\begin{figure}
\centering
\includegraphics[width = 6.5in]{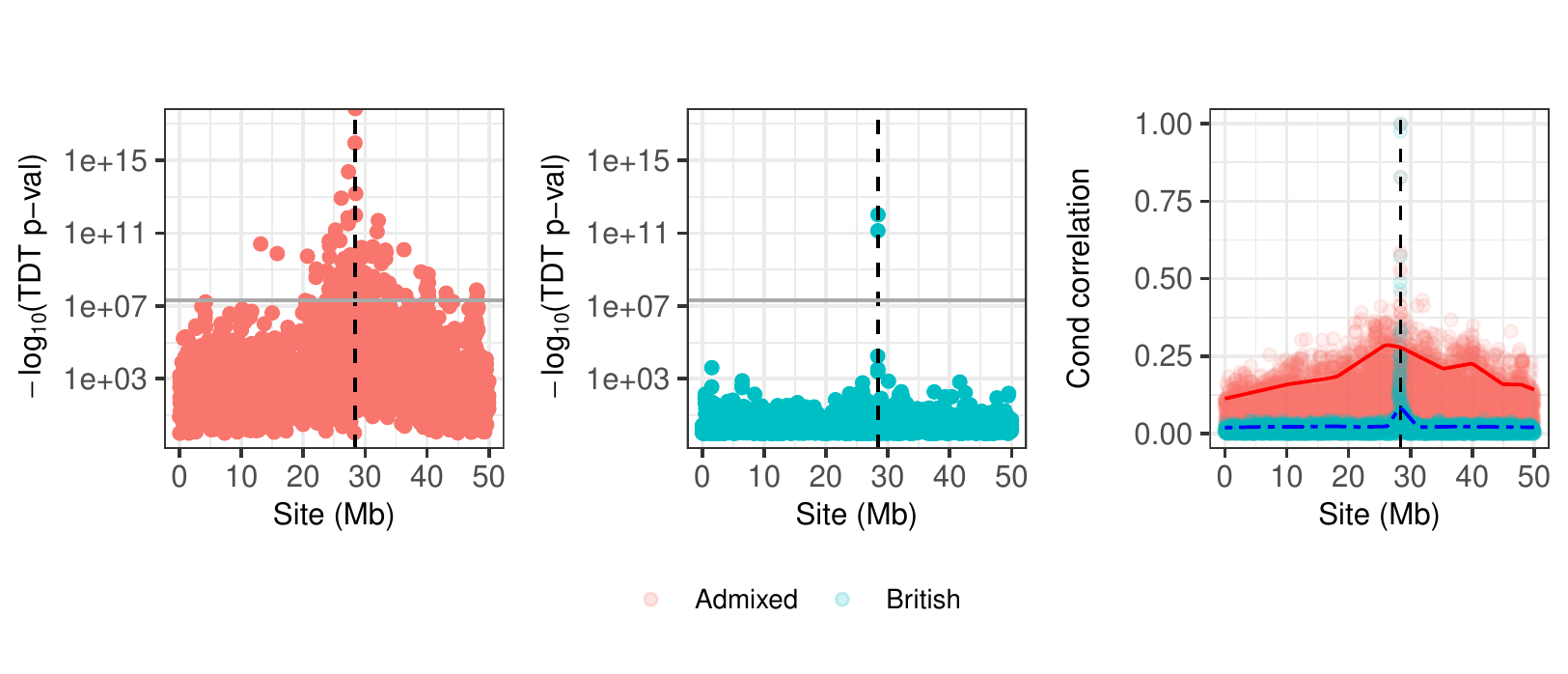}
\caption{Results of the TDT in two populations.  Left and center: Manhattan plots on chromosome 22, which contains the one true causal SNP, indicated with a dashed vertical line. The genome-wide significance threshold is shown with a gray horizontal line. The left panel shows an admixed population, whereas the center panel shows a British population. Right: a plot of the absolute correlations between the causal SNP and the other SNPs, conditional on the parental haplotypes. The red solid and blue dot-dashed curves indicate a smoothed 90\% quantile of the absolute correlation with the causal SNP across the chromosome, for the admixed and British populations, respectively.}
\label{fig:admixed_manhattan}
\end{figure}

We create a synthetic population of 2,500 second-generation admixed
individuals whose parents are the children of one ethnically British
individual and one ethnically African individual. The haplotypes of
the parents are real in the sense that they are phased haplotypes from
the UK Biobank data set \citep{bycroft2018}. For simplicity, we create
a binary synthetic response $Y$ from a logistic regression model with
a single causal SNP. We choose a signal strength such that
that the heritability of the trait is 18\% and an intercept such that
20\% of the observations have $Y=1$. 
The causal SNP is chosen at a site with a large difference in allele
frequency between the British and African populations. Then, we carry
out the TDT at each site and report the p-values in a Manhattan plot
in Figure \ref{fig:admixed_manhattan}. Note that even if we demand
that the p-values are smaller than the genome-wide significance
threshold of $5\cdot10^{-8}$, the TDT reports discoveries all across
the chromosome. We compare this to an identical simulation comprised
only of British individuals in the center panel of Figure
\ref{fig:admixed_manhattan}. Here, the TDT only reports discoveries near the true causal SNP. 

The TDT behaves differently in these two populations because of the
different correlation structure after conditioning on the parental
haplotypes. In the admixed population, there are large correlations
between sites far away, but not in the British population; see the
right panel of Figure \ref{fig:admixed_manhattan}. Because of the large LD in the admixed population, testing the full-chromosome causal null in \eqref{eq:cond_null} cannot give reliable information about the location of the causal SNP---this limitation applies both to the TDT and to the Digital Twin Test from Section~\ref{sec:full_chrom_testing}. This weakness of the TDT has been noticed before in in the admixed setting \citep{wang2016}. That work developed an analytical correction based on population-genetic quantities. Instead, we will next extend the Digital Twin Test to provably localize the causal regions using conditional independence testing.

\subsection{The Local Digital Twin Test}
\label{sec:cond_dtt}

Since the location of the causal SNPs cannot be reliably found from the TDT or any other full-chromosome test, we next introduce the {\em Local Digital Twin Test} to identify distinct causal regions. Our method partitions the genome into disjoint regions and then constructs a p-values for the hypothesis that the region contains no causal SNPs. Because we investigate many regions simultaneously, it is critical to leverage modern methods for multiple testing, such as the Benjamini--Hochberg procedure \citep{Benjamini1995}, in order to have higher power than conservative methods like Bonferroni. To this end, we will also develop a technical modification of the Local Digital Twin Test that yields {\em independent} p-values, which can be used as inputs to more powerful multiple testing procedures.

Formally, let $G$ be a partition of $\{1,\dots,p\}$. For each group of SNPs $g \in G$, we will consider the hypothesis
\begin{equation}
H_0^g : Y \indep X_g \mid (X^m_{-g}, X^f_{-g}, A).
\label{eq:group_null}
\end{equation}
In words, this is the hypothesis that knowing the SNPs in group $g$ is
not informative about the response once we know the remaining SNPs and
the parental haplotypes. This ensures that any detections are the results of causal SNPs in the region $g$ rather than elsewhere on the chromosome.
In what follows, we will consider groups of SNPs $g$ that form a continuous block of the chromosome, but this is not required. Notice that larger group sizes correspond to weaker statistical statements and hence the corresponding tests have higher power \citepeg{Dai2016, Sesia631390}. 

The Local Digital Twin Test---presented in Algorithm~\ref{alg:ldtt}---tests the null hypothesis in \eqref{eq:group_null} by again creating synthetic offspring from a subject's parents (the ``local digital twins''). Unlike before, the synthetic offspring are now additionally constrained to match outside the region $g$. That is, they are sampled from the distribution of
\begin{equation}
(X^m_g, X^f_g) \mid (X^m_{-g}, X^f_{-g}, A).
\label{eq:ldtt_imputation_dist}
\end{equation}
See Figure~\ref{fig:local_digital_twin_viz} for an illustration.

\begin{algorithm}
\SetAlgoLined
\DontPrintSemicolon

--- Compute the test statistic on the true data: $$t^* = T((X_{-g}, X_g), Y).\;$$
\For{$k = 1, \dots, K$}{
	--- Sample the digital twins $(\wtilde{X}_g^m, \wtilde{X}_f^m)$ from the distribution \eqref{eq:ldtt_imputation_dist}, independent from $(X^m_g,X^f_g)$ and $Y$ (see Appendix~\ref{sec:sampling_digital_twins} for an explicit sampler).\;
	--- Compute the test statistic using the digital twins: $$t_k = T((X_{-g}, \Xk^m_g + \Xk^f_g), Y).\;$$
}
--- Compute the quantile of the true statistic $t^*$ among the digital twin statistics $t_1, \dots, t_K$:
$$v = \frac{1 + \#\{k: t^* \le t_k\}}{ K+ 1}.\;$$

\KwResult{A p-value $v$ for the conditional hypothesis in \eqref{eq:cond_null}.}
\caption{The Local Digital Twin Test}
\label{alg:ldtt}
\end{algorithm}
This test both accounts for external confounders and localizes the causal regions, as stated next.
\begin{figure}
\begin{center}
\includegraphics[page = 3, width = 4in, trim = 0 20cm 0 0]{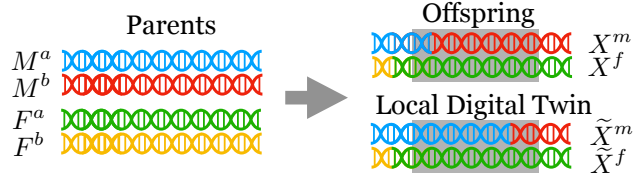}
\end{center}
\caption{A visualization of a local digital twin. The gray shaded region represents the group $g$; the local digital twin always matches the true offspring outside this region.}
\label{fig:local_digital_twin_viz}
\end{figure}
\begin{proposition}[The Local Digital Twin Test is immune to external confounders]
The Local Digital Twin Test is a valid test of the hypothesis 
\begin{equation}
H_0^g : Y \indep X_g \mid (X^m_{-g}, X^f_{-g}, A, Z), 
\label{eq:group_conf_null}
\end{equation}
where $Z$ is any (possibly unknown) external confounder. That is,
under \eqref{eq:group_conf_null}, for all $\alpha \in (0,1)$, the
output $v$ of Algorithm~\ref{alg:ldtt} satisfies
\begin{equation*} 
P(v \leq \alpha) \le \alpha.
\end{equation*}
\label{prop:ldtt_causal_validity}
\end{proposition}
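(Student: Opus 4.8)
The plan is to prove the proposition in two stages, mirroring the structure the paper has already set up for the full-chromosome test. First, I would show that the Local Digital Twin Test is a valid test of the ``bare'' conditional null in \eqref{eq:group_null}, i.e.\ $H_0^g: Y \indep X_g \mid (X^m_{-g}, X^f_{-g}, A)$, without reference to any confounder $Z$. Then I would invoke a confounder-promotion argument analogous to Theorem~\ref{thm:causality} to upgrade this to validity for \eqref{eq:group_conf_null}.

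For the first stage, the key observation is that Algorithm~\ref{alg:ldtt} is an instance of the conditional randomization test (CRT) of \citep{candes2016}, applied with the conditioning set $(X^m_{-g}, X^f_{-g}, A)$. Concretely, write $W = (X^m_{-g}, X^f_{-g}, A)$ and let $Q$ denote the conditional law in \eqref{eq:ldtt_imputation_dist}, namely the law of $(X^m_g, X^f_g)$ given $W$. This law is exactly known because it is determined by the Haldane HMM of Section~\ref{sec:recomb_model}: conditioning the HMM trajectory on the observed alleles outside $g$ and on the ancestral haplotypes $A$ yields a well-defined distribution from which the explicit sampler of Appendix~\ref{sec:sampling_digital_twins} draws. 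Under $H_0^g$, we have $Y \indep (X^m_g, X^f_g) \mid W$, and the digital twins $(\wtilde X_g^m, \wtilde X_g^f)^{(k)}$ are drawn i.i.d.\ from $Q$ given $W$, independently of $Y$ and of the true $(X^m_g,X^f_g)$. Hence, conditionally on $(W, Y)$, the true offspring block $(X^m_g,X^f_g)$ and the $K$ twin blocks are exchangeable, so the statistics $t^*, t_1, \dots, t_K$ — all of which are functions of $(X_{-g}, \text{(some version of the block)}, Y)$ with $X_{-g}$ itself measurable with respect to $W$ — are exchangeable conditionally on $(W,Y)$. A standard rank/quantile argument for exchangeable variables then gives $P(v \le \alpha \mid W, Y) \le \alpha$ (the ``$1+$'' in numerator and denominator handling ties conservatively), and marginalizing gives $P(v \le \alpha) \le \alpha$.

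For the second stage, I would argue exactly as in the proof of Theorem~\ref{thm:causality}. Because $Z$ is an external confounder, the defining relation \eqref{eq:external_confounder} says the law of the offspring haplotypes $X$ given $A$ is unchanged by conditioning on $Z$; this persists after further conditioning on $(X^m_{-g}, X^f_{-g})$, so $Z \indep (X^m_g, X^f_g) \mid W$. Combined with the assumed null \eqref{eq:group_conf_null}, i.e.\ $Y \indep X_g \mid (W, Z)$, a short conditional-independence manipulation (the same contrapositive used in Section~\ref{sec:causality}: $Y \notindep X_g \mid W \implies Y \notindep X_g \mid (W,Z)$, together with the fact that the sampler of \eqref{eq:ldtt_imputation_dist} does not depend on $Z$) yields $Y \indep X_g \mid W$, which is precisely \eqref{eq:group_null}. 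Applying the first stage then completes the proof.

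The main obstacle is the first stage: carefully verifying that the conditioning set $W = (X^m_{-g}, X^f_{-g}, A)$ is exactly the set with respect to which $Q$ (the imputation law in \eqref{eq:ldtt_imputation_dist}) is a genuine conditional distribution, and that everything entering $t^*$ and $t_k$ other than the swapped block is $W$-measurable — in particular that $X_{-g}$ is a deterministic function of $(X^m_{-g},X^f_{-g})$, hence of $W$. Once this bookkeeping is in place the exchangeability is immediate and the quantile bound is routine; the genetics-specific content is entirely contained in the fact, already granted by Section~\ref{sec:recomb_model} and Appendix~\ref{sec:sampling_digital_twins}, that \eqref{eq:ldtt_imputation_dist} is known and samplable. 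I would also note in passing that the independence of the HMMs across chromosomes and across the maternal/paternal copies (assumed in Section~\ref{sec:recomb_model}) is what makes the conditional law factor nicely, but it is not strictly needed for validity — only knowledge of the joint conditional law \eqref{eq:ldtt_imputation_dist} is.
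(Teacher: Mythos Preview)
Your proposal is correct and follows essentially the same approach as the paper's own proof: reduce the confounder-augmented null \eqref{eq:group_conf_null} to the bare null \eqref{eq:group_null} via the external-confounder independence $Z \indep X_g \mid (X^m_{-g}, X^f_{-g}, A)$, and then appeal to the fact that Algorithm~\ref{alg:ldtt} is an instance of the conditional randomization test of \citep{candes2016}. The paper presents these two steps in the opposite order and simply cites \citep{candes2016} for CRT validity, whereas you spell out the exchangeability argument in more detail, but the substance is identical.
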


The Local Digital Twin Test p-values above are not guaranteed to be
independent for null groups, so we present a modification that gives
provably independent p-values in Appendix~\ref{app:independent_p}; see
Algorithm~\ref{alg:ldtt-independent} therein. The main idea is to
isolate the randomness used to test each region. The details require substantial additional notation, however, so they are deferred to the Appendix.

\begin{theorem}[Independence of null  p-values]
Algorithm~\ref{alg:ldtt-independent} (defined in Appendix~\ref{app:independent_p}) produces p-values $v_g$ satisfying the following properties:
\begin{enumerate}
\item For groups $g$ satisfying the Local Digital Twin Test null hypothesis in \eqref{eq:group_conf_null}, the distribution of $v_g$ stochastically dominates the uniform distribution, i.e., $v_g$ is a valid p-value.
\item The p-values for null groups are jointly independent.
\end{enumerate}
\label{thm:independent_pvals}
\end{theorem}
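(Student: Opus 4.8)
The plan is to construct Algorithm~\ref{alg:ldtt-independent} so that the p-value $v_g$ for each group is a deterministic function of (i) the data outside the group $X_{-g}^m, X_{-g}^f$, the parents $A$, and the response $Y$, and (ii) a packet of fresh auxiliary randomness $W_g$ that is used \emph{only} in testing group $g$, with the packets $\{W_g\}_{g \in G}$ drawn mutually independently and independently of everything else. Concretely, for testing group $g$ one re-samples not just the digital twins inside $g$ but also, in a nested fashion, resamples the rest of the chromosome conditional on $A$ so that the statistic for group $g$ never reads the true $X$ values of any \emph{other} null group directly; the randomness $W_g$ drives both the $K$ twin draws inside $g$ and whatever resampling is needed outside. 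Once the p-values are written in this form, independence will follow from a standard argument: condition on $A$ and on the collection of ``true'' group contents, then observe that each $v_g$ depends on the conditioned quantities only through a coordinate-wise function of the independent packets $W_g$.

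The first step is to pin down the exact construction and verify \textbf{validity (claim 1)} for a single null group. This reduces to re-running the exchangeability argument behind Proposition~\ref{prop:ldtt_causal_validity}: under the null in \eqref{eq:group_conf_null}, the true realization $(X_g^m, X_g^f)$ and the $K$ independent draws from the imputation distribution \eqref{eq:ldtt_imputation_dist} are exchangeable given $(X_{-g}^m, X_{-g}^f, A, Z, Y)$; any extra resampling of $X_{-g}$ that is performed symmetrically across the real instance and the $K$ twins preserves this exchangeability, and hence the rank statistic $v_g = (1 + \#\{k : t^\ast \le t_k\})/(K+1)$ stochastically dominates the uniform. Theorem~\ref{thm:causality}-style reasoning handles the external confounder $Z$ exactly as in the earlier propositions, so claim 1 is essentially a recapitulation with the modified algorithm.

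The second step is \textbf{joint independence (claim 2)}, which is the crux. Here I would condition on the $\sigma$-algebra $\mathcal{F}$ generated by $A$ and by the quantities that the construction treats as ``fixed inputs'' for every group simultaneously --- I expect this to be $A$ together with $Y$ and the full offspring haplotypes $X^m, X^f$ (the point of the nested resampling is precisely that $v_g$ does not actually need the true $X_g$, only the conditional law, so conditioning on all of $X$ loses nothing). Given $\mathcal{F}$, each $v_g$ is a measurable function $\phi_g(W_g)$ of its own auxiliary packet alone. Since $W_1, \dots, W_{|G|}$ are independent of each other and of $\mathcal{F}$, the conditioned variables $v_g$ are conditionally independent; because the conditional law of $v_g$ given $\mathcal{F}$ does not depend on $\mathcal{F}$ (it is the fixed distribution of $\phi_g(W_g)$), conditional independence upgrades to unconditional joint independence by the usual disintegration argument. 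Combining this with claim 1 gives the theorem.

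The main obstacle I anticipate is arranging the construction so that the clean ``$v_g = \phi_g(W_g)$ given $\mathcal{F}$'' decomposition actually holds. In the vanilla Local Digital Twin Test, the statistic $T((X_{-g}, \wtilde X_g), Y)$ reads the \emph{true} values $X_{-g}$, which for a non-null neighbor $g'$ overlaps the true $X_{g'}$ --- and more subtly, the twins for group $g$ and group $g'$ are both generated from a shared HMM conditioning, so naive sampling couples them. The fix (and the delicate bookkeeping deferred to the appendix) is to resample, for each group $g$, an entirely self-contained copy of the chromosome from the HMM given $A$, using independent noise $W_g$, so that the only thing shared across groups is $A$ (and $Y$), which we condition on. Making sure this nested resampling (a) still yields a valid p-value --- i.e., does not destroy the exchangeability needed for claim 1 --- and (b) genuinely severs all cross-group dependence is where the careful argument lives; everything after that is the routine measure-theoretic packaging sketched above.
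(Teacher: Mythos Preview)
Your high-level framework (show conditional independence given some $\mathcal{F}$ with conditional law not depending on $\mathcal{F}$, then marginalize) is the right shape, but your proposed construction does not achieve it. The claim that $v_g$ can be written as $\phi_g(W_g)$ given $\mathcal{F} \supset \sigma(X)$ is false for any test with power: $v_g$ must read the true $X_g$ to compute $t^*$, and once you condition on $X_g$ the rank is no longer uniform --- it concentrates wherever $X_g$ happens to sit relative to the twin distribution. Your parenthetical ``$v_g$ does not actually need the true $X_g$, only the conditional law'' is simply incorrect. Your ``nested resampling'' fix --- replacing $X_{-g}$ by a fresh $\hat X_{-g}$ in the statistic --- does not help either. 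If the twins $\tilde X_g^k$ are still drawn conditional on the \emph{true} $X_{-g}$ (required for validity, since the null conditions on $X_{-g}$), then $v_g$ depends on the true $X_{g'}$ through the twin-generating distribution, and this couples $v_{g_1}$ and $v_{g_2}$ whenever $X_{g_1}, X_{g_2}$ are correlated under the HMM. If instead the twins are drawn conditional on $\hat X_{-g}$, then $X_g$ and $\tilde X_g^k$ are no longer exchangeable given $Y$ (since $Y$ may carry information about $X_g$ through the true $X_{-g}$, which the twins ignore), so validity fails.

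The paper's construction is different in kind: it samples the latent ancestral states $U^m,U^f$ and conditions on their values $U_\B$ at the group boundaries. This factorizes the HMM, so that conditional on $(U_\B,A)$ the blocks $(X_g,\tilde X_g^1,\dots,\tilde X_g^K)$ are \emph{independent across $g$} and exchangeable within each null $g$. The test statistic for group $g$ is computed from a \emph{masked} $X^{\text{(mask)}}_{-g}$ (true values where no recombination occurs, conditional means where one does) that lies in the conditioning $\sigma$-algebra, so the statistic for $g$ never reads the free randomness of any other group. Simultaneous exchangeability conditional on $(U_\B, X^{\text{(mask)}}, A, Y)$ then yields jointly independent uniform ranks. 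The decoupling comes from conditioning on latent HMM structure, not from resampling observed data --- that is the idea you are missing.
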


As before, any score function $T(\cdot)$ leads to valid inference, but some functions will lead to higher power. One option requiring only trio data is the statistic in \eqref{eq:tdt_test_stat}. This gives a TDT-like test, with additionally localization guarantees at the price of lower power. A better option is to incorporate external GWAS data with the previously-discussed test statistic in \eqref{eq:test_stat}.

\subsection{Weighted and ordered testing}

To further increase power, the external GWAS data should be used to
prioritize the most promising regions of the genome for confirmatory
testing with the Local Digital Twin Test; \citep{roeder2009} gives a general discussion of incorporating weighted testing in GWAS and \citep{ren2020knockoffs} show how to use information to improve the ordering for knockoff testing. As a concrete example, when using the test statistic in \eqref{eq:test_stat} one could order the hypothesis by decreasing value of 
\begin{equation}
\label{eq:beta_hat_ordering}
w_g = \sum_{j\in g} |\hat{\beta}_j|,
\end{equation} 
and then use the Selective SeqStep procedure \citep{barber2015} or an accumulation test \citep{li2017} to give a final selection set with guaranteed FDR control. We numerically explore this approach in Section~\ref{sec:simulations}.

\subsection{Resolution of the Local Digital Twin Test}
The Local Digital Twin Test will depend primarily on those samples
where a recombination is observed in the region $g$ under study. For a
single subject, consider $\wtilde{X}^m_g$ in
Algorithm~\ref{alg:ldtt}. Suppose that no recombination event occurs
in the region $g$ for the maternal haplotype $X^m$. Then, the
subject's haplotype $X^m$ is copied from the same of the mother's two
haplotypes (for example, $M^a$) at both edges of $g$, so the most
likely sample from distribution in \eqref{eq:ldtt_imputation_dist} is such that the entire region including $g$ matches $M^a$, in which case $\wtilde{X}^m_g = X^m_g = M^a_g$; see $\wtilde{X}^f$ and $X^f$in Figure~\ref{fig:local_digital_twin_viz}. These samples are not informative, since there is no contrast between the true observations and the digital twins to use for inference. On the other hand, if a recombination takes place in the group $g$, then $\wtilde{X}^m_g$ and $X^m_g$ will differ; see $\wtilde{X}^m$ and $X^m$ in Figure~\ref{fig:local_digital_twin_viz}.  Putting this together, only samples that contain a recombination event in the group $g$ will be of use.

This behavior limits the resolution of the Local Digital Twin test. In humans, the probability of a recombination event in a 1 Mb region is about a 1\% \cite{altshuler2005haplotype}. Thus, for a data set of $100,000$ trios, each region of size 1 Mb would only have about $2,000$ samples with recombination events:
\begin{equation*}
\underbrace{100,000}_{\text{sample size}} \times \underbrace{1 \ \text{Mb}}_{\text{size of region}}
\times \underbrace{0.01 \text{ \ recombinations / Mb}}_{\text{recombination rate}} \times \underbrace{2}_{\text{Haplotypes per subject}} = \underbrace{2,000.}_{\text{recombination events}}
\end{equation*}
As a result, the Local Digital Twin Test would essentially compare the $2,000$ subjects with recombination events to their digital twins. The remaining observations could be used for fitting the multivariate model, but would not be used for confirmatory inference. This reduction in effective sample size at fine resolution limits the precision with which the Local Digital Twin Test can make discoveries, and is the price we pay for making stronger causal inferences.
Nonetheless, our goal is not single-SNP precision, but rather is to rigorously establish that there are distinct causal regions across the genome.

\section{Simulation experiments}
\label{sec:simulations}
In this section, we examine the performance of the Digital Twin Test in semi-synthetic examples, focusing on the binary response case, so that the standard TDT can serve as a benchmark. We form our parent-offspring population by taking real haplotypes from the UK Biobank data set and sample offspring according to the recombination model from Section \ref{sec:recomb_model}. In each experiment, we sample the offspring once and then repeat the generation of the synthetic phenotype multiple times, indicating the standard error with error bars. When presenting the results, we index the signal strength by {\em heritability}: a $[0,1]$-valued scale defined in Appendix~\ref{app:heritability}. An R package implementing the methods below together with notebook tutorials is available at \url{https://github.com/stephenbates19/digitaltwins}.

\subsection{Testing the full-chromosome causal null}
\label{sec:global_null_dtt_sim}
We first examine the ability of the Digital Twin Test to test the full-chromosome causal null. In this simulation, we test only one hypothesis, so we seek to control the usual type-I error rate at the $\alpha = 0.05$ level. We create a synthetic population of $n=2,500$ parent-child trios and generate a binary valued response coming from a sparse logistic regression model
\begin{equation}
\label{eq:logistic_causal_model}
\log \left(\frac{P(Y_i = 1)}{P(Y_i = 0)}\right) = \beta_0 + \beta^\top X^\i, 
\end{equation}
with 10 nonzero entries of $\beta$ of equal strength, chosen uniformly at random.
The intercept $\beta_0$ is chosen so that the fraction of cases is $50 \%, 20 \%,$ or $5 \%$. We use $p = 6,820$ SNPs from chromosome 20, which has width 63 Mb.
We emphasize that the above gives a well-defined structural equation model on $(X, Y)$, and the nonzero entries of the coefficient vector $\beta$ correspond to the SNPs that have a causal effect.\footnote{The reader may wonder about the identifiability of this model. Note that both $X$ and $Y$ are random variables, so provided that in the distribution of $X^{\i, m} + X^{\i, f}$ is not contained in a subspace of rank less than $p$, then this model is identified.} 

In order to use the Digital Twin Test for confirmatory analysis, we
take an external GWAS data set that contains $7,500$ other non-trio
observations from the UK Biobank. With these data, we fit a
$\ell_1$-penalized logistic regression model (with regularization
parameter chosen by cross-validation) to obtain a predictive model for
the trait. We denote the resulting coefficient estimate
$\hat{\beta}$. Then, we apply the Digital Twin Test on the trio data
with feature importance statistic in \eqref{eq:test_stat} to produce a
single p-value, rejecting when it falls below $\alpha = 0.05$.

\begin{figure}
\includegraphics[width = 6.5in]{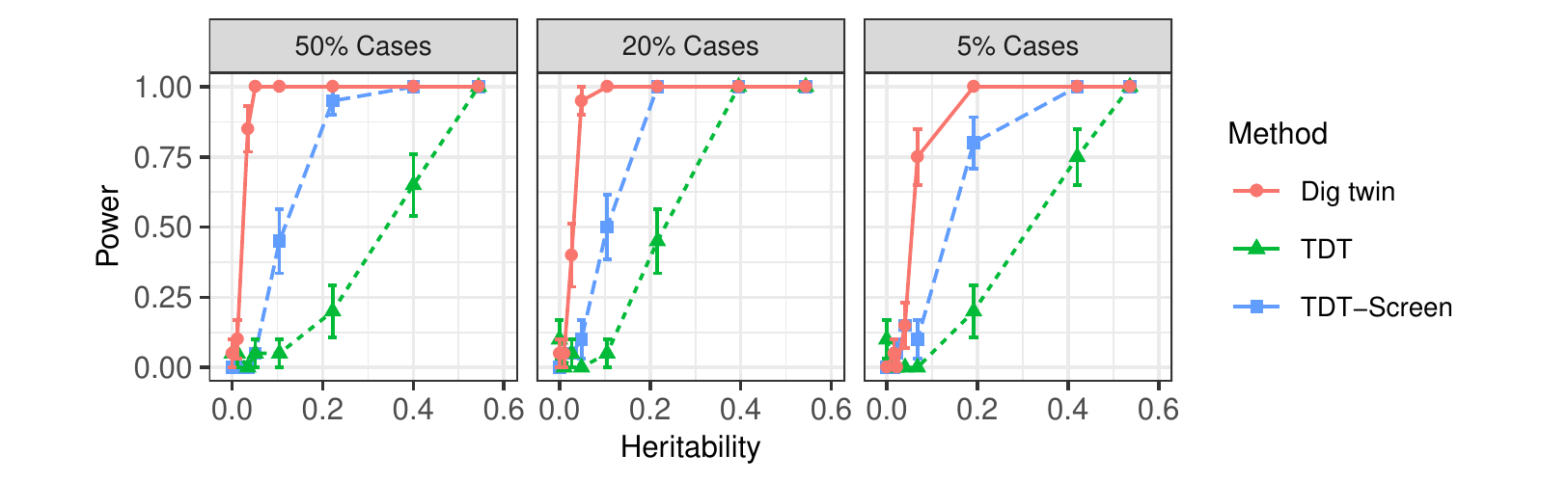}
\caption{Power of the Digital Twin Test compared to TDT benchmarks for testing the full-chromosome causal null. }
\label{fig:full_chrom_power}
\end{figure}

We take the TDT as a natural benchmark, interpreting its output in two
alternative ways. First, we take the minimum p-value after applying the
TDT at every SNP and then Bonferroni correct it (this does not use the
non-trio GWAS data). Second, we compute the Bonferroni-adjusted minimum
p-value only on the coordinates with nonzero coefficients in the lasso
fit $\hat{\beta}$ on the external GWAS data. Because $\hat{\beta}$ is
sparse, this method has a less severe Bonferroni correction and may be
more powerful than the other TDT procedure.

Since all three methods are valid tests of the null hypothesis that there is no causal SNP on the chromosome, we directly compare their power in Figure \ref{fig:full_chrom_power}, using 20 independent realizations for each data point. We find that the Digital Twin Test has higher power than the TDT, even when the latter attempts to leverage the external GWAS as a screening step. The leftmost point in each panel is the null case with zero heritability; the empirical error of the Digital Twin Test does not exceed the nominal level of $\alpha = 0.05$ in any of the three cases.

\subsection{Localization}
\label{sec:sims_large_scale}

We now examine the ability of the Local Digital Twin Test to identify causal regions. Here, we use $p = 591,513$ SNPs on chromosomes $1-22$, split into $532$ pre-determined groups of size approximately 5 Mb. 
The response is again generated from the logistic regression model in \eqref{eq:logistic_causal_model}, and the number of non-zero coefficients in the true causal model is varied as a control parameter.
We consider a sample of $n = 10,000$ trios with an external GWAS of
size $50,000$ used to fit a logistic regression model $\hat{\beta}$ as in the previous section. The fitted coefficients $\hat{\beta}$ are used to form the test statistic in \eqref{eq:test_stat}.
Here, we take the nominal level for the FDR to be $\alpha = 0.2$. Each experiment is repeated $10$ times. Additional technical details about these simulations can be found in Appendix~\ref{sec:app_sims_large_scale}.

We compare the following procedures:
\begin{enumerate}
\item {\bf Digital Twin Test---Accum}  We apply the Local Digital Twin Test at each of the groups to obtain one p-value per group. We also use the external GWAS data to order the regions from most to least promising as in \eqref{eq:beta_hat_ordering}, and use an accumulation test \citep{li2017} to produce a final set of discoveries. This method is guaranteed to control the FDR.\footnote{Strictly speaking, this procedure controls a modified version of the FDR (see \citep{li2017}), but the difference will be unimportant in settings with a large number of discoveries. This is a property of the accumulation test, not the Local Digital Twin Test, and other procedures can be used for standard FDR control.}

\item {\bf TDT---Sreen---BH}: For each group, we apply the TDT to the SNPs with nonzero entries of $\hat{\beta}$, the model fit on the external GWAS. Then, we report the minimum p-value after adjusting it with Bonferroni. Lastly, we apply the Benjamini--Hochberg procedure to report a set of groups. This method assumes the TDT p-values are valid for the group null hypothesis in \eqref{eq:group_null}, which is not fully correct, so this method does not have formal guarantees.

\item {\bf TDT---BH}: We proceed as above, except that we apply the TDT to all SNPs. This method also incorrectly assumes the TDT p-values are valid for the group null hypothesis in \eqref{eq:group_null}, so it does not have formal guarantees.
\end{enumerate}

\begin{figure}
  \centering
  \includegraphics[width=6.5in]{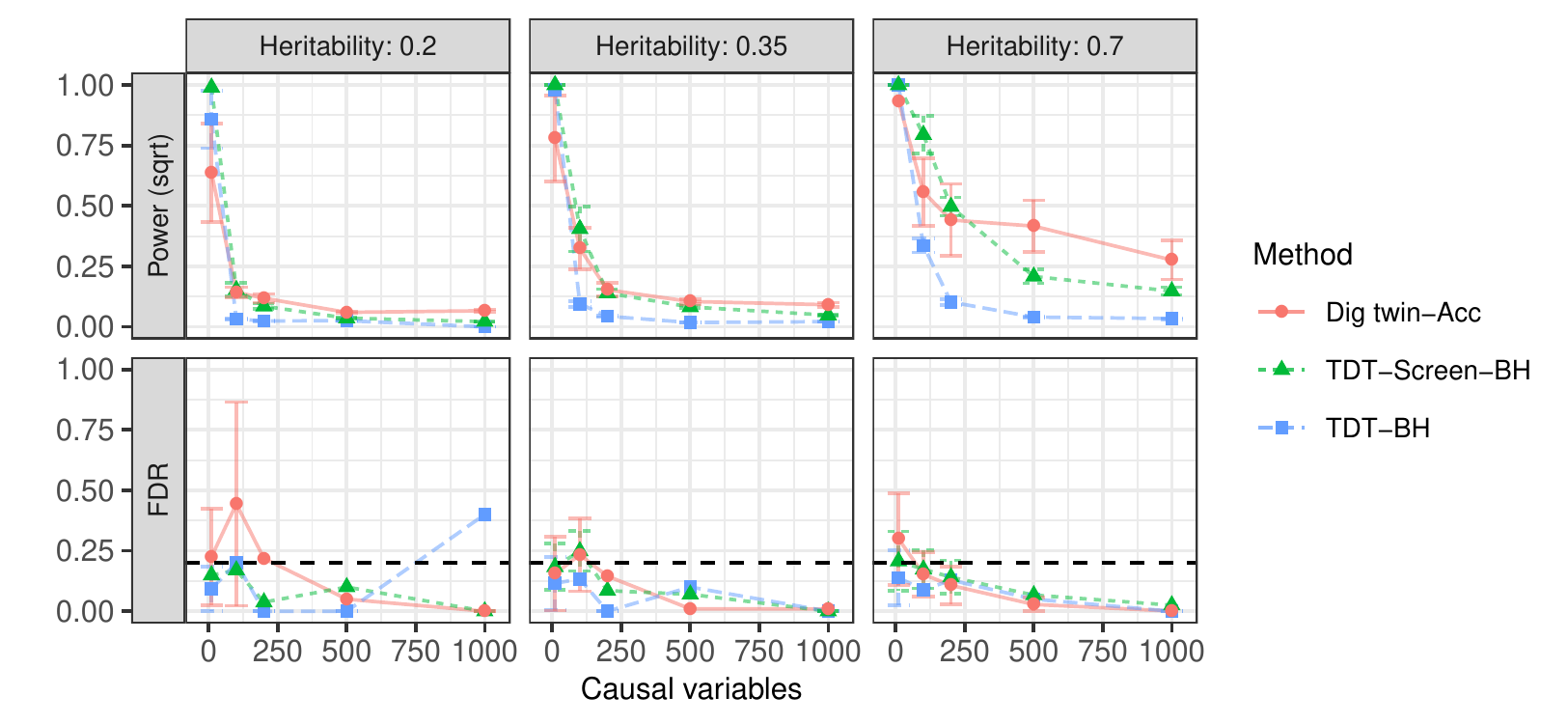}
  \caption{Performance of the Local Digital Twin Test and TDT in the binary-response full-genome simulations from Section~\ref{sec:sims_large_scale}. Here, error bars give one standard deviation, and the dashed horizontal line
  indicates the nominal FDR level}
  \label{fig:exp_large_binary_causal}
\end{figure}

We report the results in Figure~\ref{fig:exp_large_binary_causal}. The
Local Digital Twin Test has generally comparable power to the screened
TDT method with Benjamini--Hochberg, with moderate power improvements
for traits caused by many SNPs. We report on a similar experiment with a continuous
response in Appendix~\ref{app:sim_full_gen_cts}.

Although the TDT benchmarks empirically control the FDR here, we
emphasize that it formally tests the full-chromosome null in 
\eqref{eq:cond_null}, so it is not valid for our goal of localizing the
causal SNPs into the given groups. The group hypothesis in
\eqref{eq:group_null} can only be rigorously tested by a test of the null in
\eqref{eq:cond_null} if the SNPs in different groups are
independent. If the TDT-based group p-values were valid, these two benchmarks would control the FDR. This may be a reasonable approximation within this
experiment, because the groups are wide and the population is
homogeneous, so the LD decays rapidly. However, this assumption fails
in other cases, as we previewed in Section~\ref{sec:trio_ld} and will
return to next.

\subsection{Spurious discoveries with the TDT}
\label{sec:exp_tdt_loc}

We have seen in Section~\ref{sec:trio_ld} an example where the TDT makes false discoveries throughout the chromosome because it does not account for LD. Here, we revisit this example more carefully, demonstrating that the TDT-based benchmarks from Section~\ref{sec:sims_large_scale} can dramatically fail to control type-I errors for the group hypotheses.

\begin{figure}[p]
\includegraphics[height = 2.5in]{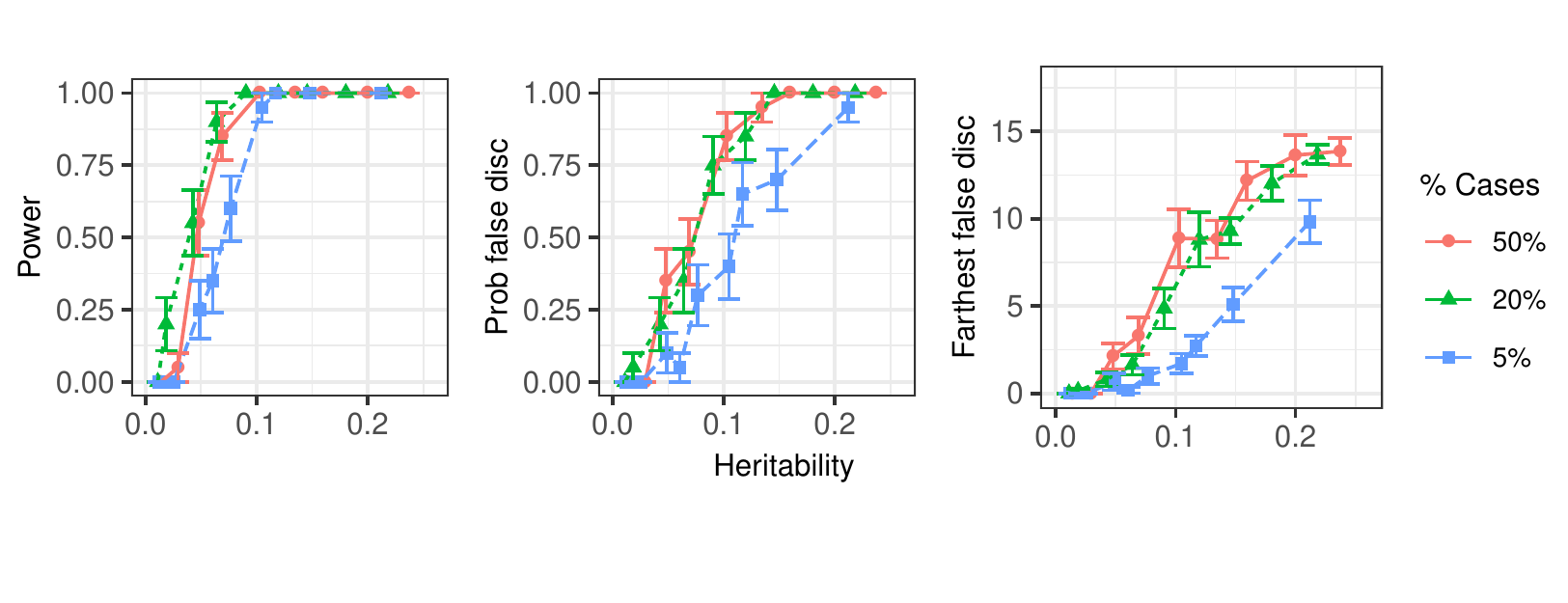}
\caption{Performance of the TDT in an admixed population. Left: power. Middle: probability of reporting at least one spurious discovery at least 1 Mb away from the causal SNP. Right: distance from the true causal SNP to the farthest spurious discovery.} 
\label{fig:admixed_power_fd}
\end{figure}

\begin{figure}[p]
\centering
\includegraphics[width = 6in]{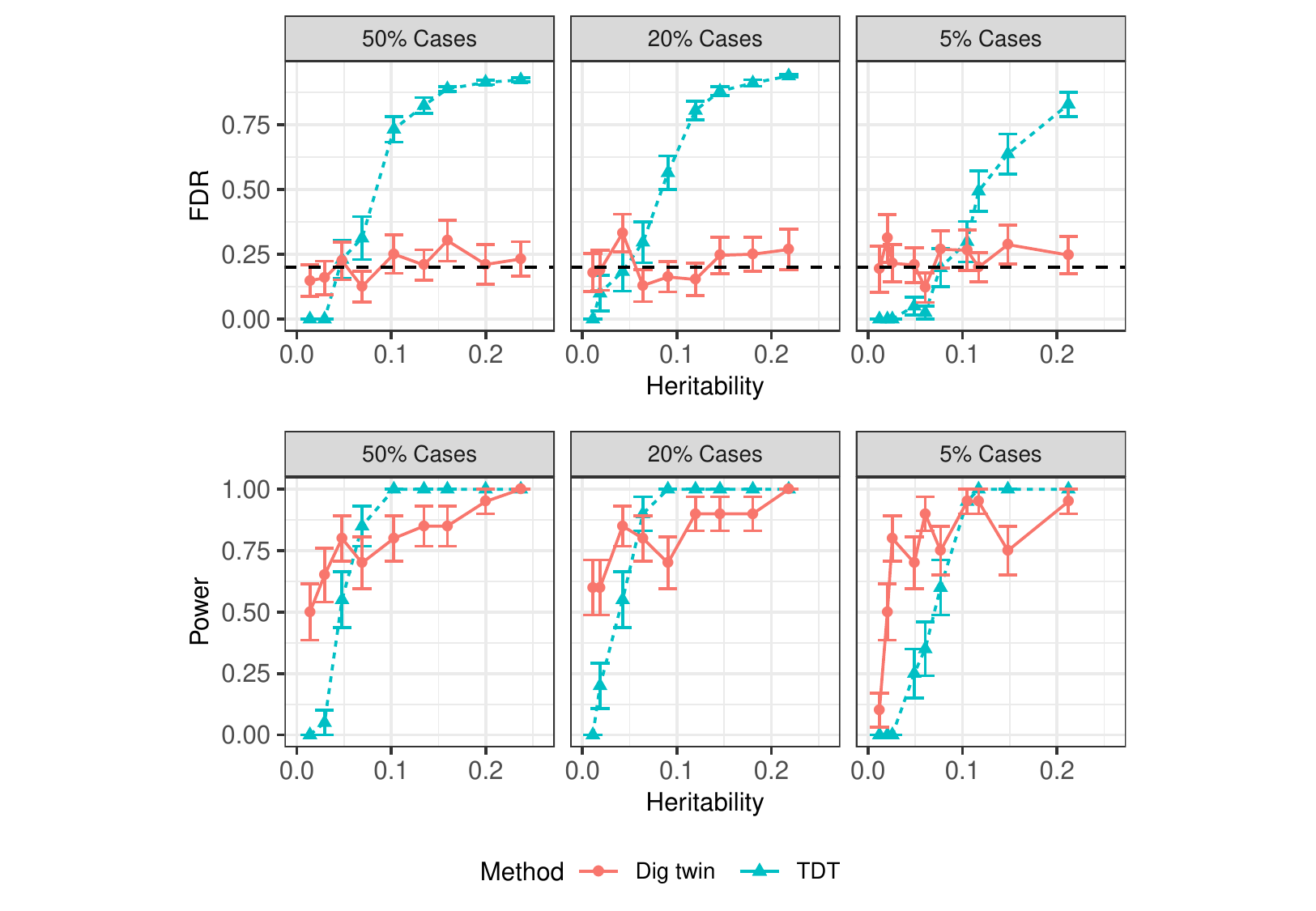}
\caption{Performance of the Local Digital Twin Test and TDT in an
  admixed population. In the top panel, the dashed horizontal line
  indicates the nominal FDR level for the Local Digital Twin
  Test. Because the TDT is using the genome-wide significance level, the nominal FDR level for the TDT is less than
  $0.05$.}
\label{fig:admixed_fdr}
\end{figure}

For simplicity, we perform the TDT at each SNP and report a SNP as significant if the p-value is below the genome-wide significance level, $5 \cdot 10^{-8}$. We use this demanding threshold to emphasize that even the most conservative existing methods violate type-I error control here. Indeed, recall that in the left-hand panel of Figure~\ref{fig:admixed_manhattan}, we find rejections all across the chromosome, even though there is only one causal SNP. If the heritability is large, Figure~\ref{fig:admixed_power_fd} shows that spurious associations at least 1 Mb away from the causal SNP occur with high probability; here, each experiment is repeated 20 times. Moreover, spurious associations occur as far away as 15 Mb from the causal SNP, reinforcing that one should not interpret the TDT p-values as providing evidence of distinct causal findings.

Proceeding as in Section~\ref{sec:sims_large_scale}, we split the chromosome into 25 groups of size 2 Mb and consider the procedure that reports a group as significant if any SNP therein has a p-value below the genome-wide significance threshold. Even though it may be tempting to expect this method to control the family-wise error rate (and by extension, the FDR), the results from Figure~\ref{fig:admixed_power_fd} suggest that this is not the case, and we confirm this directly in Figure~\ref{fig:admixed_fdr}. Similarly, we conclude that all of the other less conservative benchmarks from Section~\ref{sec:sims_large_scale} will fail to control the FDR in this setting. By contrast, the Local Digital Twin Test does not suffer from this problem because its p-values are valid for the conditional null hypotheses in \eqref{eq:cond_null}. Indeed, Figure~\ref{fig:admixed_fdr} shows that the Local Digital Twin Test controls the FDR at the nominal level.

\section{Analysis of autism spectrum disorders}

We apply the Digital Twin Test to study the genetic basis of autism
spectrum disorder (ASD) using a data set of 2,565 parent-child trios
from the Autism Sequencing Consortium (ASC) \citep{Buxbaum2012},
accessed through dbGaP \citep{Mailman2007, Tryka2014}; see Appendix~\ref{app:autism-analysis-details} for details about the sample and data processing. ASD has complex genetic roots, with variability arising from common SNPs, copy number variation, and de novo gene-disrupting mutations \citep{Iossifov2014,Gaugler2014}. It has been theoretically conjectured \citep{Devlin2011} and empirically observed \citep{Anney2012} that common variants only have small effects on ASD. As a result, only recently have individual SNPs been implicated by GWAS, and the first individual SNPs appear in \citep{Grove2019}, where 5 SNPs were reported at the genome-wide significance level in a sample of approximately 45,000 individuals. Since one of these five SNPs, {\em rs910805} (an intergenic variant on chromosome 20), was also genotyped in the ASC trio data, we will test its causal validity with the Local Digital Twin Test. Note, however, that the ASC trio data was used as part of \citep{Grove2019}, so our results here must be viewed as a practical demonstration and not as an independent replication of these findings. 

Since we do not have a corresponding cohort of non-trio observations, we use the Local Digital Twin Test with the test statistic in \eqref{eq:tdt_test_stat} and report the results in Table~\ref{tab:asd-conditional-tdt}. We test groups centered around SNP rs910805 of increasing size, ranging from 1 Mb to the whole chromosome (in which case this Local Digital Twin Test is equivalent to the TDT). For large enough groups, we find significance at the $\alpha = 0.05$ level---we are only testing one SNP so we do not need to achieve the genome-wide significance level. We are unable to reject at finer resolutions; smaller groups correspond to fundamentally more demanding statistical hypotheses, and the effect size here is small. Our analysis suggests that the observed association with rs910805 is not due to confounding and that there is instead a causal variant in its vicinity.

\begin{table}[t]
\begin{center}
 \begin{tabular}{|c| c c c c c c |} 
 \hline
 Resolution & 1 Mb & 2 Mb & 3 Mb & 4 Mb & 5 Mb & full-chromomsome \\ 
 \hline
 p-value & 0.237 & 0.146 & 0.100 & 0.0168 & 0.0244 & 0.011 \\ 
 \hline
\end{tabular}
\end{center}
\caption{Analysis of ASD with Local Digital Twin Test at different resolutions.}
\label{tab:asd-conditional-tdt}
\end{table}

\section{Discussion}

We have developed a flexible statistical test to rigorously establish
that a genomic region contains causal variants. It requires no
modeling assumptions because the inferences are based only on
variation arising in meiosis, a randomized experiment performed by
nature.  Nonetheless, we highlight here the two most important
limitations of this work. First, our method currently relies on
computationally phased haplotypes, as discussed in
Section~\ref{sec:setting}. Even though phasing is typically accurate,
especially when family data are available, it would be interesting to
thoroughly understand the robustness to phasing errors. Second, in
order to operate at practical resolutions, our method requires many
parent-child trios or duos, which are more challenging to collect than
unrelated individuals. Fortunately, larger data sets will become more
abundant as sequencing costs are rapidly decreasing
\citep{NIHSeequence}. Moreover, contemporary studies genotyping entire
populations (e.g., in Iceland \cite{deCODE} and Finland
\cite{finngen}) are particularly promising because they simultaneously
address the challenges of accurate phasing and obtaining data from
both parents and offspring. In such studies, drawing conclusions from
GWAS that are provably immune to confounding variables is appealing,
making the Digital Twin Test a tantalizing way to confirm candidate
GWAS discoveries.

Turning to the statistical aspect of this work, our finite-sample
causal conclusions come from the fact that (i) $A$ blocks the relevant confounders, and (ii) we exactly know the distribution of $X$ given
$A$. This statistical form may appear in other settings. Concretely,
for any random variables $X,Y$ and $Z$ such that the relationship in
Definition~\ref{def:ext_confounder} holds, Theorem~\ref{thm:causality}
holds as well. More generally, for a structural equation model on vectors $(A,X,Y,Z)$, if $A$ satisfies the back-door criterion \citep{pearl2009} with respect to $X$ and $Y$, then tests of the hypothesis \eqref{eq:cond_null} only detect causal effects. One can then leverage conditional testing techniques,
such as the conditional randomization test or knockoffs, to make
causal inferences with finite-sample guarantees.

\section*{Acknowledgements}
S.~B. is partly supported by a Ric Weiland Fellowship.
S.~B., M.~S., C.~S., and E.~C.~are supported by NSF grant DMS 1712800.
C.~S.~and E.~C.~are also supported by a Math+X grant (Simons Foundation) and by NSF grant 1934578.
We thank Trevor Hastie, Manuel Rivas, Robert Tibshirani, and Wing Wong for discussions of an early version of this manuscript and the Stanford Research Computing Center for computational resources and support.

\newpage
\bibliographystyle{ieeetr}
\bibliography{causal_snps.bib}

\newpage
\appendix

\renewcommand\thefigure{S.\arabic{figure}}
\setcounter{figure}{0}   

\section{Constructing independent p-values}
\label{app:independent_p}
We present a technical refinement of the Local Digital Twin Test (still testing the null hypothesis in \eqref{eq:group_null}) that produces independent p-values for disjoint null groups. The core idea is that for each individual, we determine which groups contain a recombination event; then, we generate the digital twins only in those groups, sampling from a modified version of the distribution \eqref{eq:ldtt_imputation_dist}. By construction, these digital twins will have recombination events in the same groups as the true observations, but the recombination events will be randomly perturbed. This separates the randomness used to test each region. We then show how to create feature importance statistics that ignore
the randomness in other regions. As a result, the tests in different regions are fully decoupled.

To begin, let $G$ denote the collection of disjoint groups of SNPs that we wish to test. We will assume that each group $g$ is a continuous block of the form $\{g_-, g_- + 1, \dots, g_+\}$ for endpoints $g_- \le g_+$ in $\{1,\dots,p\}$. Also, let $\mathcal{B} \subset \{1, \dots, p\}$ be the set of SNPs that form the boundary of some group: $\mathcal{B} = \{j : j = g_- \text{ or } j = g_+ \text{ for some } g\in G\}$.

For simplicity, we consider a single observation and define $U^m \in \{a, b\}^p$ and $U^f \in \{a, b\}^p$ to be the underlying ancestral states for the two haplotypes $X^f$ and $X^m$. We first sample from the posterior distribution conditional on the data based on the HMM described in Section \ref{sec:recomb_model} and then condition on $U^m_\B$ and $U^f_\B$.  This conditioning splits the distribution of $(X_g^f, U_g^f)$ and $(X_g^m, U_g^m)$ into {\em independent} HMMs across the groups $g \in G$, see \cite{Bates2019} for a general discussion of this conditioning idea. In the modified Local Digital Twin Test, each digital twin will be sampled from the distribution of
\begin{equation}
(X^f_g, X^m_g) \mid (U^m_\B, U^f_\B, A).
\label{eq:mldtt_imputation_dist}
\end{equation}
This is essentially the same as in the original Local Digital Twin Test, which instead samples from \eqref{eq:ldtt_imputation_dist}, because the values $U^m_\B, U^f_\B$ are almost perfectly determined by the data $X^f_{-g}$ and $X^m_{-g}$, since the measured SNPs are dense and recombination is rare. Sampling from the distribution \eqref{eq:mldtt_imputation_dist} is straightforward: we simply sample $U^m_g$ and $U^f_g$, an HMM sampling operation, and then from this we sample $X^m_g$ and $X^f_g$ from the emission distribution.

We now introduce additional notation that will be needed to describe
the digital twin sampling. Given $U^{f, \i}$ and $U^{m, \i}$, for each
observation $i=1,\dots,n$, let $R^{m, \i} = \{g \in G: U^{m,\i}_{g_-}
\ne U^{m,\i}_{g_+}\}$ be the set of groups where individual $i$ has an
odd number of recombinations (for small groups, there will typically
be exactly one recombination). Define $R^{f, \i}$ in the analogous
way. Now, let $\fixed^m = \{(i,j) :  j \notin g \text{ for all } g \in
R^{m, \i}\}$ be the set of observations and sites in groups with an even number of recombinations. Define $\fixed^f$ in the analogous way. The upcoming algorithm will hold $X^{m,\i}_j$ for $(i, j) \in \fixed^m$ fixed when generating the digital twins.

Next, we define the masked version of $X$ which is safe to use in the feature importance statistics; feature importance statistics using the masked version of $X$ will operate independently in different groups. The masked version is defined as follows:
\begin{align}
\label{eq:masked_x}
\begin{split}
X^\maskmi_j = 
\begin{cases}
X^\mii_g & \text{ if } (i,j) \in \fixed^m, \\
E[X^\mii_g \mid M_a, M_b, U^m_\B]  & \text{ otherwise},
\end{cases} \\
X^\maskfi_j = 
\begin{cases}
X^\fii_g & \text{ if }  (i,j) \in \fixed^f, \\
E[X^\fii_g \mid M_a, M_b, U^f_\B]  & \text{ otherwise},
\end{cases} \\
\end{split}
\end{align}
for each observation $i = 1,\dots,n$. The entries of $X^m$ in groups with no recombination events will remain unchanged in both the digital twins and in $X^\maskmi$.
On the other hand, for groups with observed recombinations, the digital twins will vary, so, for these entries, we set $X^\maskm$ to be a constant: the average imputation based on the parental haplotypes and $(U^m_\B, U^f_\B)$. These conditional expectations can be computed easily, thanks to the Markov property. Notice that, since there are very few groups with recombination events, most entries of $X^\maskm$ will be equal to those of $X^m$, and most entries of $X^\maskf$ will be equal to those of $X^f$. For instance, with groups of size 1 Mb we will have that over $99\%$ of the entries match. 

With the notation in place, we can finally present the procedure in Algorithm~\ref{alg:ldtt-independent}. \\
\begin{algorithm}[H]
\SetAlgoLined
\DontPrintSemicolon
--- Sample $U^m$ and $U^f$ given $(X^m, X^f, A)$ according the HMM in Section~\ref{sec:recomb_model} (see Appendix~\ref{sec:sampling_digital_twins} for an explicit sampler). \;
--- Define $X^\maskm$ and $X^\maskf$ according to \eqref{eq:masked_x}.\;

\For{$g \in G$}{
--- Compute the test statistic on the true data: $$t^* = T(X^\maskm_{-g} + X^\maskf_{-g}, X^m_{g} + X^f_{g}, Y).\;$$
\For{$k = 1, \dots, K$}{
	--- For observations $i$ such that $g \in R^{\mii}$, sample $\wtilde{X}^\mii_g$ from the distribution \eqref{eq:mldtt_imputation_dist}, independent of $X^f, X^m, U^f, U^m$ and $Y$ (see Appendix~\ref{sec:sampling_digital_twins} for an explicit sampler). Otherwise, set $\wtilde{X}^\mii_g = X^\mii_g$. Sample $\Xk^f_g$ analogously. \vspace{0.05in} \;
	--- Compute the test statistic on the digital twins: 
	$$t_k = T(X^\maskm_{-g} + X^\maskf_{-g}, \Xk^m_{g} + \Xk^f_{g}, Y).\;$$
}
--- Compute the quantile of the true statistic $t^*$ among the digital twin statistics $t_1, \dots, t_K$:
$$v_g = \frac{1 + \#\{k: t^* \le t_k\}}{ K+ 1}.\;$$
}

\KwResult{A family of independent p-values $\{v_g\}_{g \in G}$ for
  testing the conditional hypotheses  \eqref{eq:group_null}.}
\caption{The Local Digital Twin Test with independent p-values}
\label{alg:ldtt-independent}
\end{algorithm}

The null p-values produced by this algorithm are jointly independent, which we record in Theorem~\ref{thm:independent_pvals} in Section~\ref{sec:cond_dtt}. The proof is given in Appendix~\ref{app:proofs}.

\paragraph{Remark} This procedure can be viewed as an analogue to the model-X
knockoffs procedure \citep{candes2016}, except with $K$ knockoffs instead of the usual two. Conditional on $U_\B$, the Markov structure ensures that the imputations (i.e. the multiple knockoffs) are jointly exchangeable and hence are analogous to multiple knockoffs. Masking the values of $X$ where the imputations take place is analogous to making the feature importance statistic for group $g$ invariant to swapping $X_{g'}$ and $\Xk_{g'}$ for all groups $g'$ in the original knockoff setting.

\section{Proofs}
\label{app:proofs}

We first give an explicit proof for Proposition~\ref{prop:chromosome_dtt_causality}. The argument is essentially the same as that used in Section~\ref{sec:causality} to prove Theorem~\ref{thm:causality}.
\begin{proof}[Proof of Proposition~\ref{prop:chromosome_dtt_causality}]
Suppose the null hypothesis $H_0$ is true, so that 
\begin{equation*}
Y \indep X_\C \mid (A, Z).
\end{equation*}
Then, since $Z$ is an external confounder, 
\begin{equation*}
Z \indep X_\C \mid A,
\end{equation*}
and so
\begin{equation*}
(Y, Z) \indep X_\C \mid A.
\end{equation*}
Thus, the following null hypothesis is also true:
\begin{equation*}
H_0': Y \indep X_\C \mid A.
\end{equation*}
The Digital Twin Test is a valid test of $H_0'$, since it is an instance of the conditional randomization test \citep{candes2016}, and so the result follows.
\end{proof}

Next, we turn to Proposition~\ref{prop:ldtt_causal_validity}. The proof is essentially the same as the above.
\begin{proof}[Proof of Proposition~\ref{prop:ldtt_causal_validity}]
Suppose the null hypothesis $H_0^g$ is true so that 
\begin{equation*}
Y \indep X_g \mid (X^m_{-g}, X^f_{-g}, A, Z).
\end{equation*}
Then, since $Z$ is an external confounder, we have that
\begin{equation*}
Z \indep X_g \mid (X^m_{-g}, X^f_{-g}, A),
\end{equation*}
and so
\begin{equation*}
(Y, Z) \indep X_g \mid (X^m_{-g}, X^f_{-g}, A).
\end{equation*}
Thus, the following null hypothesis is also true:
\begin{equation*}
H_0^{\prime g}: Y \indep X_g \mid (X^m_{-g}, X^f_{-g}, A).
\end{equation*}
The Local Digital Twin Test is a valid test of $H_0^{\prime g}$, since it is an instance of the conditional randomization test \citep{candes2016}, and so the result follows.
\end{proof}

Lastly, we give a proof that the p-values from the modified Local Digital Twin Test in Appendix~\ref{app:independent_p} are super-uniform and independent. 

\begin{proof}[Proof of Theorem~\ref{thm:independent_pvals}]
First, note that any group $g$ satisfying the Local Digital Twin Test null hypothesis in \eqref{eq:group_conf_null} satisfies the weaker null hypothesis in \eqref{eq:group_null} by the argument in the proof of Proposition~\ref{prop:ldtt_causal_validity}. As a result, it suffices to show the result for the case without any external confounder $Z$.
 
Let $n_G = |G|$ be the number of groups and recall that $K$ is the number of iterations in Algorithm~\ref{alg:ldtt-independent}. Conditional on $(U^m_\B, U^f_\B, A)$, we have the following simultaneous exchangeability property:
\begin{multline}
\left((X^m_{g_1}, \Xk^{m, 1}_{g_1}, \dots, \Xk^{m, K}_{g_1}), \dots, 
 (X^m_{g_{n_G}}, \Xk^{m, 1}_{g_{n_G}}, \dots, \Xk^{m, K}_{g_{n_G}})\right) \eqd \\ 
 \left( (X^m_{g_1}, \Xk^{m, 1}_{g_1}, \dots, \Xk^{m, K}_{n_G})_{\sigma_1}, \dots,
 (X^m_{g_{n_G}}, \Xk^{m, 1}_{g_{n_G}}, \dots, \Xk^{m, K}_{g_{n_G}})_{\sigma_{n_G}} \right),
 \label{eq:mldtt_xm_exch}
\end{multline}
for any permutations $\sigma_1, \dots, \sigma_{n_G}$ of $(1,\dots,K+1)$. This is true because conditional on $(U^m_\B, U^f_\B, A)$, the $X^m_g$ for different groups $g$ are independent HMMs. Moreover, \eqref{eq:mldtt_xm_exch} continues to hold if we additionally condition on $(X^\maskm, X^\maskf)$, since this is equivalent to conditioning on more entries of $U^f$ and $U^m$, after which the $X^m_g$ remain independent HMMs. The analogous property holds for $X^f$.

Next, we show that \eqref{eq:mldtt_xm_exch} continues to hold for the null groups even after additionally conditioning on $Y.$ Without loss of generality, suppose $g_1$ is a null group. First, by construction of $(U^m, U^f)$ we have that
\begin{equation*}
Y \indep (U^m, U^f) \mid X^f, X^m, A.
\end{equation*} 
As a result,
\begin{equation*}
Y \indep X_{g_1} \mid (X^m_{-g_1}, X^f_{-g_1}, A) \implies Y \indep (X_{g_1}, U^m, U^f) \mid (X^m_{-g_1}, X^f_{-g_1}, A).
\end{equation*} 
The right-hand side of the above further implies that 
\begin{equation*}
Y \indep X_{g_1} \mid (X^m_{-g_1}, X^f_{-g_1}, X^\maskm, X^\maskf, U^f_\B, U^m_\B, A),
\end{equation*} 
since $(X^\maskm, X^\maskf, U^f_\B, U^m_\B)$ is a function of $(X^m_{g_1}, X^f_{g_1}, X^m_{-g_1}, X^f_{-g_1}, U^f, U^m, A)$. 
By the independence in the construction of the digital twins, we then have
\begin{multline*}
Y \indep X_{g_1} \mid \bigr((X^m_{g_2}, \Xk^{m, 1}_{g_2}, \dots, \Xk^{m, K}_{g_2}), \dots, 
 (X^m_{g_{n_G}}, \Xk^{m, 1}_{g_{n_G}}, \dots, \Xk^{m, K}_{g_{n_G}}), \\
 X^\maskm, X^\maskf, U^f_\B, U^m_\B, A \bigl),
\end{multline*}
and further 
\begin{multline*}
Y \indep (X^m_{g_1}, \Xk^{m, 1}_{g_1}, \dots, \Xk^{m, K}_{g_1}) \mid \biggl( (X^m_{g_2}, \Xk^{m, 1}_{g_2}, \dots, \Xk^{m, K}_{g_2}), \dots, 
 (X^m_{g_{n_G}}, \Xk^{m, 1}_{g_{n_G}}, \dots, \Xk^{m, K}_{g_{n_G}}), \\
 X^\maskm, X^\maskf, U^f_\B, U^m_\B, A\biggr).
\end{multline*}
We conclude that \eqref{eq:mldtt_xm_exch} holds conditional on $(X^\maskm, X^\maskf, U^f_\B, U^m_\B, A, Y)$ for any permutations $\sigma_1, \dots, \sigma_{n_G}$ of $(1,\dots,K+1)$ such that the non-null groups have the identity permutation. The analogous statement holds for $X^f$. Combining this with the analogous result for $X^f$, we then have:
\begin{multline}
\left((X_{g_1}, \Xk^{1}_{g_1}, \dots, \Xk^{K}_{g_1}), \dots, 
 (X_{g_{n_G}}, \Xk^{1}_{g_{n_G}}, \dots, \Xk^{K}_{g_{n_G}})\right) \eqd \\ 
 \left( (X_{g_1}, \Xk^{1}_{g_1}, \dots, \Xk^{K}_{n_G})_{\sigma_1}, \dots,
 (X_{g_{n_G}}, \Xk^{1}_{g_{n_G}}, \dots, \Xk^{K}_{g_{n_G}})_{\sigma_{n_G}} \right),
 \label{eq:mldtt_x_exch}
\end{multline}
conditional on $(X^\maskm, X^\maskf, U^f_\B, U^m_\B, A, Y)$ for any permutations $\sigma_1, \dots, \sigma_{n_G}$ of $(1,\dots,K+1)$ such that the non-null groups have the identity permutation.

We now turn to the feature importance statistics $t^*, t_1, \dots, t_K$ for group $g$. These are a function of $(X^m_{g}, \Xk^{m, 1}_{g}, \dots, \Xk^{m, K}_{g})$ and $(X^f_{g}, \Xk^{f, 1}_{g}, \dots, \Xk^{f, K}_{g})$ together with $(X^\maskm, X^\maskf, U^f_\B, U^m_\B,$ $A, Y)$, and we are conditioning on the latter. In particular, the feature importance statistics for group $g$ do not depend on $(X^m_{g'}, \Xk^{m, 1}_{g'}, \dots, \Xk^{m, k}_{g'})$ for any other group $g'$, except through the entries already fixed in $X^\maskm$. Combining this fact with the result \eqref{eq:mldtt_x_exch}, we have that
\begin{equation*}
\left((t^*_{g_1}, t^{1}_{g_1}, \dots, t^{K}_{g_1}), \dots, 
 (t^*_{g_{n_G}}, t^{1}_{g_{n_G}}, \dots, t^{K}_{g_{n_G}}) \right) \eqd 
 \left( (t^*_{g_1}, t^{1}_{g_1}, \dots, t^{K}_{g_1})_{\sigma_1}, \dots,
 (t^*_{g_{n_G}}, t^{1}_{g_{n_G}}, \dots, t^{K}_{g_{n_G}})_{\sigma_{n_G}} \right),
\end{equation*}
for any permutations $\sigma_1, \dots, \sigma_{n_G}$ of $(1,\dots,K+1)$, provided the non-null groups have the identity permutation, conditional on $(X^\maskm, X^\maskf, U^f_\B, U^m_\B, A, Y)$. This implies that conditionally the p-values $v_g$ for all null groups $g$ are jointly independent and have the discrete uniform distribution (if we randomly break ties), and so the same holds after marginalizing out $(X^\maskm, X^\maskf, U^f_\B, U^m_\B, Y)$.
\end{proof}

\section{Sampling digital twins}
\label{sec:sampling_digital_twins}
In this section, we explicitly describe how to sample the digital twins in Algorithms~\ref{alg:global_dtt}-\ref{alg:ldtt-independent}. In each case, the sampling is a standard conditional sampling operation on an HMM \citepeg{Jurafsky2000}, but we give the details for our setting for completeness.

For convenience, we consider a single observation so that $X^m, X^f, M^a, M^b,$ $F^a, F^b \in \{0,1\}^p$ are the haplotypes, and whose hidden ancestral states are $U^m, U^f \in \{a,b\}^p$. Without loss of generality, we consider one chromosome; different chromosomes are independent, so this is sufficient. In addition, for $u \in \{a, b\}$ we define:
\begin{equation*}
\overline{u} = \begin{cases}
a & \text{ when } u = b, \\
b & \text{ when } u = a,
\end{cases}
\end{equation*}
i.e., the complementary ancestral strand.

\subsection{Global digital twins}
This subsection describes how to sample the digital twins in Algorithm~\ref{alg:global_dtt}. Without loss of generality, we consider $X^m$, sampled as follows:
\begin{enumerate}
\item (Hidden ancestral states) Sample $u^m_1$ uniformly from $\{a,b\}$.
\item For $j=1,\dots,p$, independently sample $u^m_j$ as
\begin{equation*}
u_{j}^m = 
\begin{cases}
u_{j-1}^m & \text{with probability $1/2 (1 + e^{-2 d_j})$}, \vspace{.1in}\\
\overline{u_{j-1}^{m}} & \text{ otherwise}.
\end{cases}
\end{equation*}

\item (De novo mutations) For each $j=1,\dots,p$, independently sample $X_j^m$ as
\begin{equation*}
X_{j}^m = \begin{cases}
M_{j}^{u_j^{m}} & \text{with probability $1 - \epsilon$}, \\
1 - M_{j}^{u_j^{m}} & \text{otherwise}.
\end{cases}
\end{equation*}
\end{enumerate}

\subsection{Local digital twins}
\label{app:ldtt_sampler}

This subsection shows how to sample the digital twins in Algorithm~\ref{alg:ldtt}. Again, we consider only $X^m$ explicitly.

\subsubsection*{Computing forward-backward weights} 
We start by computing the forward-backward weights for the HMM from Section~\ref{sec:recomb_model}. Recall that our model gives the following quantities:
\begin{equation*}
P(U^m_j = u^m_j \mid U^m_{j-1} = u^m_{j-1}) = 
\begin{cases}
\frac{1}{2} (1 + e^{-2 d_j}) & \text{ if $u^m_j = u^m_{j-1}$}, \\
1 - \frac{1}{2} (1 + e^{-2 d_j}) & \text{otherwise},
\end{cases}
\end{equation*}
\begin{equation*}
P(X^m_j = x^m_j \mid U^m_j = u^m_j) = 
\begin{cases}
1 - \epsilon & \text{ if $M^{u^m_j}_1 = X^m_j$}, \\
\epsilon & \text{otherwise},
\end{cases}
\end{equation*}
and
\begin{equation*}
P(U^m_j = U^m_{j'}) = \frac{1}{2} (1 + e^{-2(d_j+\dots+d_{j'-1})}), \text{ \quad for $j' > j$}.
\end{equation*}
We define the {\em forward weights} for $j=1,\dots,p$ as
\begin{equation*}
\alpha^m_j(u^m_j) = P(X^m_{1:j} = x^m_{1:j}, U^m_j = u^m_j),
\end{equation*}
where the notation $x_{j:j'}$ indicates the subvector of $x$ of entries $j,j+1,\dots, j'$.
The $\alpha^m_j(\cdot)$ can be computed recursively as follows:
\begin{align*}
\alpha^m_1(u^m_1) &= \begin{cases}
\frac{1}{2} (1 - \epsilon)  & \text{ if $M^{u^m_1}_1 = X^m_1$}, \\
\frac{1}{2} \epsilon & \text{otherwise},
\end{cases} \\
\alpha_j^m(u_j^m) &= \sum_{u \in \{a,b\}} \alpha^m_{j-1}(u) P(U^m_j = u^m_j \mid U^m_{j-1} = u) P(X^m_j = x^m_j \mid U^m_j = u^m_j),
\end{align*}
for $j=2,\dots,p$ and $u_j^m \in \{a,b\}$. For convenience, we also define $\alpha_0(u) = 1/2$ for $u \in \{a,b\}$.

Next, we define the {\em backward weights} as
\begin{equation*}
\beta_j^m(u_j^m) = P(X_{(j+1):p} = x_{(j+1):p}\mid U_j^m = u^m_j),
\end{equation*}
with the convention that $\beta_p^m(u) = 1$ for $u \in \{a, b\}$.
These can be computed recursively as follows:
\begin{equation*}
\beta_{j}(u_j^m) = \sum_{u \in \{a, b\}} \beta_{j+1}^m(u) P(U_{j+1}^m = u \mid U_{j}^m = u_j^m) P(X_{j+1}^m = x_{j+1}^m \mid U_{j+1}^m = u),
\end{equation*}
for $j = p-1,\dots,1$ and $u^j_m \in \{a, b\}$.

\subsubsection*{Sampling the local digital twins}
Consider a group of SNPs $g = \{g_-, g_- + 1, \dots, g_+\} \subset \{1,\dots,p\}$. We sample a local digital twin as follows:

\begin{enumerate}
\item (Hidden ancestral states) Sample $U^m_{g_-} \in \{a, b\}$ from the distribution
\begin{multline*}
P(U^m_{g_-} = u) \propto 
P(X_{g_-}^m = x_{g_-}^m \mid U_{g_-}^m = u)
\left[P(U^m_{g_+} = U^m_{g_-}) \beta_{g_+}(u) + P(U^m_{g_+} \ne U^m_{g_-}) \beta_{g_+}(\overline{u})\right] \\
\left[\alpha_{g_- - 1}(u)P(U^m_{g_-} = U^m_{g_- - 1}) +  \alpha_{g_- - 1}(\overline{u})P(U^m_{g_-} \ne U^m_{g_- - 1}) \right]
.
\end{multline*}
\item For $j=g_- + 1,\dots,g_+$, sample $U^m_{j} \in \{a, b\}$ from the distribution
\begin{multline*}
P(U_j^m = u) \propto P(X_j^m = x_j^m \mid U_j^m = u) P(U_j^m = u \mid U_{j-1}^m = u_{j-1}^m) \\  \left[P(U_{g_+} = U_{j}) \beta_{g_+}(u) + P(U_{g_+} \ne U_{j}) \beta_{g_+}(\overline{u})\right].
\end{multline*}
\item (De novo mutations) For each $j=g_-,\dots,g_+$, independently sample $\Xk_j^m$ as
\begin{equation*}
\Xk_{j}^m = \begin{cases}
M_{j}^{u_j^{m}} & \text{with probability $1 - \epsilon$}, \\
1 - M_{j}^{u_j^{m}} & \text{otherwise}.
\end{cases}
\end{equation*}
\end{enumerate}

\subsection{Modified local digital twins}
This subsection describes how to sample the digital twins in Algorithm~\ref{alg:ldtt-independent}. As usual, we consider only $X^m$ explicitly. First, we sample the hidden ancestral states $U^{m} \in \{a, b\}^p$ using the sampler from Appendix~\ref{app:ldtt_sampler}, except with group $g = \{1,2,\dots,p\}$.

Next, we focus on a group $g = \{g_-, g_- + 1, \dots, g_+\} \subset \{1,\dots,p\}$. If $U^m_{g_-} = U^m_{g_+}$, then we set $\Xk_g^m = X_g^m$; otherwise, we sample the digital twin as follows: 
\begin{enumerate}
\item Set $\widetilde{U}^m_{g_-} = U^m_{g_-}$ and $\widetilde{U}^m_{g_+} = U^m_{g_+}$.
\item Repeat until $m$ is odd: sample $m \sim \text{Pois}(d_{g_- + 1} + \dots + d_{g_+})$ (i.e. sample from a Poisson conditional on the outcome being odd).
\item Sample the recombination points $r_1,\dots,r_m$ as independently from the discrete distribution supported on $\{g_- + 1,\dots,g_+\}$ with probability vector proportional to 
\begin{equation*}
\left( \frac{1}{2}(1 + e^{-2d_{g_- + 1}}), \dots, \frac{1}{2}(1 + e^{-2d_{g_+}})\right).
\end{equation*}
\item For $j=g_- + 1,\dots, g_+$, define $u^m_j$ as
\begin{equation*}
u_{j}^m = \begin{cases}
u_{j-1}^m & \text{if \#\{$k$: $r_k = j$\} is even}, \vspace{.1in}\\
\overline{u_{j-1}^{m}} & \text{otherwise}.
\end{cases}
\end{equation*}
\item (De novo mutations) For each $j=g_-,\dots,g_+ -1$, independently sample $\Xk_j^m$ as
\begin{equation*}
\Xk_{j}^m = \begin{cases}
M_{j}^{\tilde{u}_j^{m}} & \text{with probability $1 - \epsilon$}, \\
1 - M_{j}^{\tilde{u}_j^{m}} & \text{otherwise.}
\end{cases}
\end{equation*}
\end{enumerate}

\section{Potential outcomes formulation}
\label{app:potential_outcomes}

We can also formulate the claims of Section~\ref{sec:causality} in
terms of the potential outcomes framework \citep{rubin1974,
  rubin2005}. Suppose that all SNPs with a causal effect are
measured. This is not realistic in practical examples, but it is challenging to formulate the localization of unmeasured SNPs in the language of potential outcomes. In any case, this will be enough to demonstrate that our inferences are consistent with this point of view. Next, we assume that the potential outcomes for unit $i$ are only affected by the subject's genotypes $x^\i$ (the SUTVA assumption \citep{Rubin1980}) and let $Y_i(x^\i)$ denote the potential outcomes for unit $i$ under the ``treatment assignment'' $x^\i \in \{0,1,2\}^p$, and let $Y_i = Y_i(X^\i)$ denote the observed value of $Y_i$. Within this framework, the potential outcomes $Y_i(x^\i)$ are considered fixed (nonrandom) but unknown, whereas $X^\i$ and $Y_i$ are considered random. We wish to understand whether certain coordinates of our vector-valued treatment assignment have a causal effect. To formalize this, consider a group of SNPs $g$. We will test the sharp null hypothesis 
\begin{equation*} 
H_0^{g, \text{po}} : Y_i(x^\i) = Y_i(\tilde{x}^{\i}) \text{ for all $x^\i$ and $\tilde{x}^{\i}$ such that $x_{-g}^\i = \tilde{x}^{\i}_{-g}$, for every unit $i$}. 
\end{equation*} That is, the potential outcomes are the same whenever the SNPs outside the group $g$ are the same. Lastly, we assume that the treatment assignment is random, that is, that the $X_i$ are independent samples from the recombination model in Section~$\ref{sec:recomb_model}$. Under these assumptions
\begin{equation*}
H_0^{g, \text{po}} \implies Y \indep X_g \mid (X^m_{-g}, X^f_{-g}, A).
\end{equation*}
Notice that the right hand side is exactly the same as in $\eqref{eq:group_null}$, the null hypothesis tested by the Local Digital Twin Test. Thus, the Local Digital Twin Test is a valid test of $H_0^{g, \text{po}}$, and the resulting inferences are causal inferences in the potential outcomes sense.

\section{ASD data analysis details}
\label{app:autism-analysis-details}

The ASC data was accessed through the dbGaP platform through accession \href{https://www.ncbi.nlm.nih.gov/projects/gap/cgi-bin/study.cgi?study_id=phs000298.v4.p3}{phs000298.v4.p3.} It contains data for $7880$ individuals from $2611$ pedigrees regarding $947,233$ autosomal markers. After removing $11$ individuals that have high missingness and keeping only complete trios, we are left with $2,565$ trios. These genotypes are computationally phased using the \texttt{SHAPEIT} software \citep{Delaneau2012} incorporating the known pedigree information \citep{OConnell2014} and using the genetic distances from the 1000 genomes project \citep{Auton2015,Sudmant2015}.

\section{Permutation tests are not robust to confounders}
\label{sec:perm_failure}

Permutation tests are not a valid alternative to the Digital Twin Test for testing causal null hypotheses, as demonstrated by the following simple example.
Consider a population consisting of two distinct subpopulations such that, for some given SNP $X_j$, $X_j = 2$ in the first subpopulation and $X_j=0$ in the second subpopulation.
Suppose we are interested in a response $Y$ on which $X_j$ has no causal effect whatsoever, but that happens to satisfy $Y=1$ in the first subpopulation and $Y=0$ in the second subpopulation due to other genetic components or external confounders.
A permutation test would have no choice but to discover a significant association between $Y$ and $X_j$, even though there is no causal relation. By contrast, the Digital Twin Test would never reject the causal null hypothesis because the digital twins preserve the population structure and hence satisfy $\Xk_j=X_j$, as shown in Table~\ref{tab:permutation_illustration}. 

Indeed, it is widely known that permutation tests need to be corrected for population structure---a common approach would involve regressing the top principal components of the genetic matrix out of the response vector \citep{Price2006}. Although this correction may significantly mitigate the problem, it is approximate and it may not account for all other unknown confounders. By contrast, Theorem~\ref{thm:causality} tells us that the Digital Twin Test implicitly accounts for a very broad family of unknown confounders.

\begin{table}[H]
\begin{center}
\begin{tabular}{c c c c | c c}
 Subject & Population & Y & $X_j$ & $X_j^*$ & $\Xk_j$ \\ [0.5ex]
 \hline
 1 & 1 & 1 & 1 & 0 & 1 \\
 2 & 1 & 1 & 1 & 1 & 1 \\
 3 & 1 & 1 & 1 & 1 & 1 \\
 4 & 1 & 1 & 1 & 0 & 1 \\
 5 & 2 & 0 & 0 & 1 & 0 \\
 6 & 2 & 0 & 0 & 0 & 0 \\
 7 & 2 & 0 & 0 & 1 & 0 \\
 8 & 2 & 0 & 0 & 0 & 0 \\
 \hline
\end{tabular}
\end{center}
\caption{A hypothetical sample in which strong population structure creates a spurious association between the trait, $Y$, and some SNP, $X_j$. The population structure is not preserved by permutations ($X_j^*$), which  report non-causal associations, but is preserved by the Digital Twin Test ($\Xk_j$), which can only discover causal effects.}
\label{tab:permutation_illustration}
\end{table}

\

\section{Auxiliary simulations and experiment details}

\subsection{Heritability}
\label{app:heritability}
Throughout this paper, the results of the simulations with binary traits are indexed by the heritability, which is measured on the liability scale:
\begin{equation*}
  \frac{\text{var}(\hat{Y})}{\text{var}(Y)} \cdot \frac{\bar{Y} (1-\bar{Y})}{\phi(\Phi^{-1}(\bar{Y}))}.
\end{equation*}
Above, $\hat{Y}$ is the mean of $Y$ conditional on the genotypes $X$, according to the true model, $\bar{Y}$ is the true expectation of $Y$, $\phi$ is the density of the standard normal distribution and $\Phi$ is its cumulative distribution function. This scale is widely used because it makes the range of heritability values independent of the diseases prevalence, $\bar{Y}$ \citep{falconer1965}.

\subsection{Main text experiment details}
\label{sec:app_sims_large_scale}

For the experiment in Section~\ref{sec:global_null_dtt_sim}, we report metrics about the lasso fit $\hat{\beta}$ on the external GWAS in Figure~\ref{fig:dtt_global_null_betas}. 

\begin{figure}
\includegraphics[width=\textwidth]{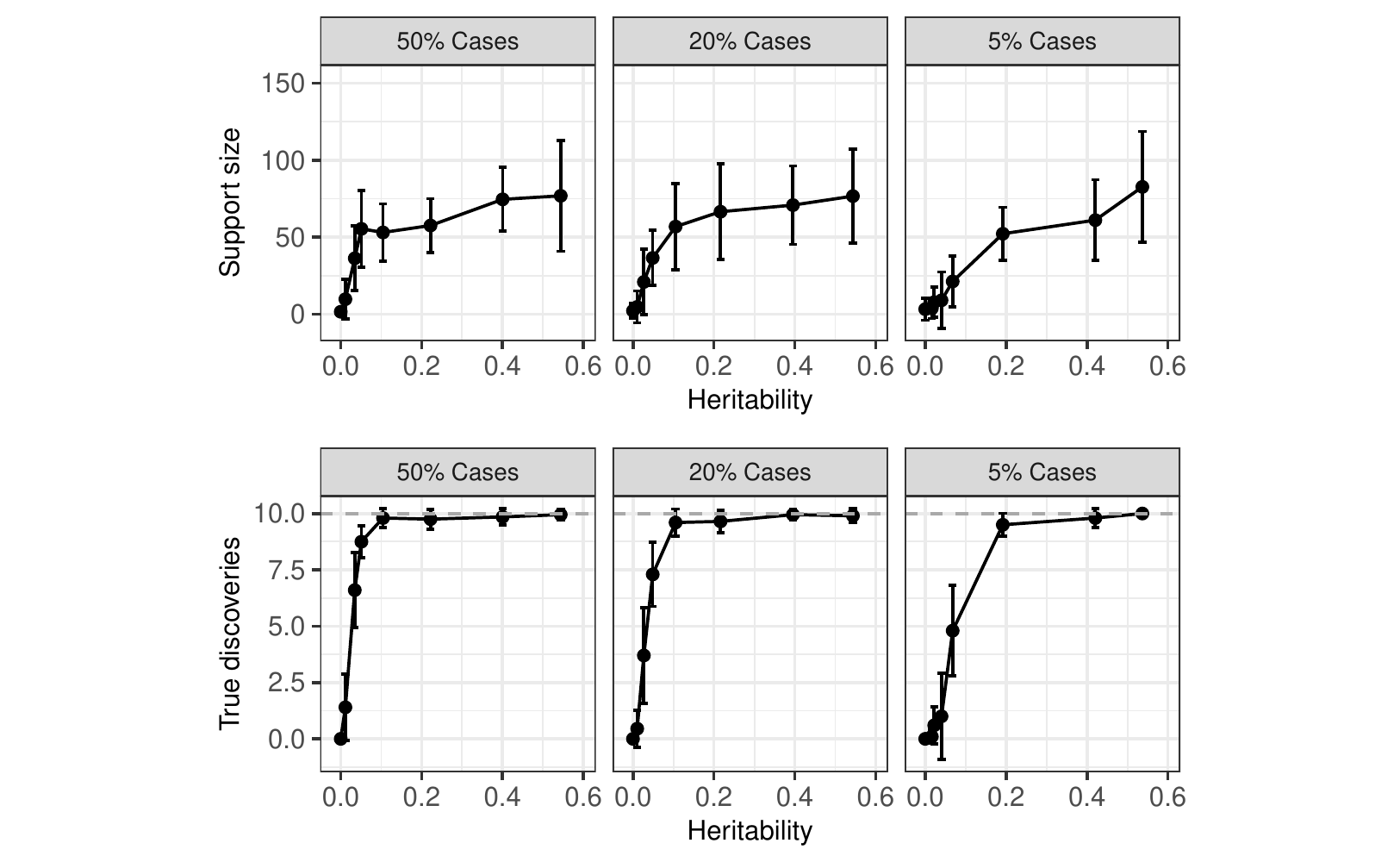}
\caption{Properties of the estimated regression coefficient $\hat{\beta}$ in the simulations in Section~\ref{sec:global_null_dtt_sim}. Error bars show one standard deviation.}
\label{fig:dtt_global_null_betas}
\end{figure}

Turning the experiment in Section~\ref{sec:sims_large_scale}, the hypotheses for the accumulation test are sorted in decreasing order with respect to $\sum_{j\in G_k} |\hat{\beta}_j^{\text{GWAS}}(\lambda_{\text{cv}})|$. We use an accumulation test with the hinge exponential function with $c = 2$ \citep{li2017}. 

\begin{figure}
  \centering
  \includegraphics[width=0.75\linewidth]{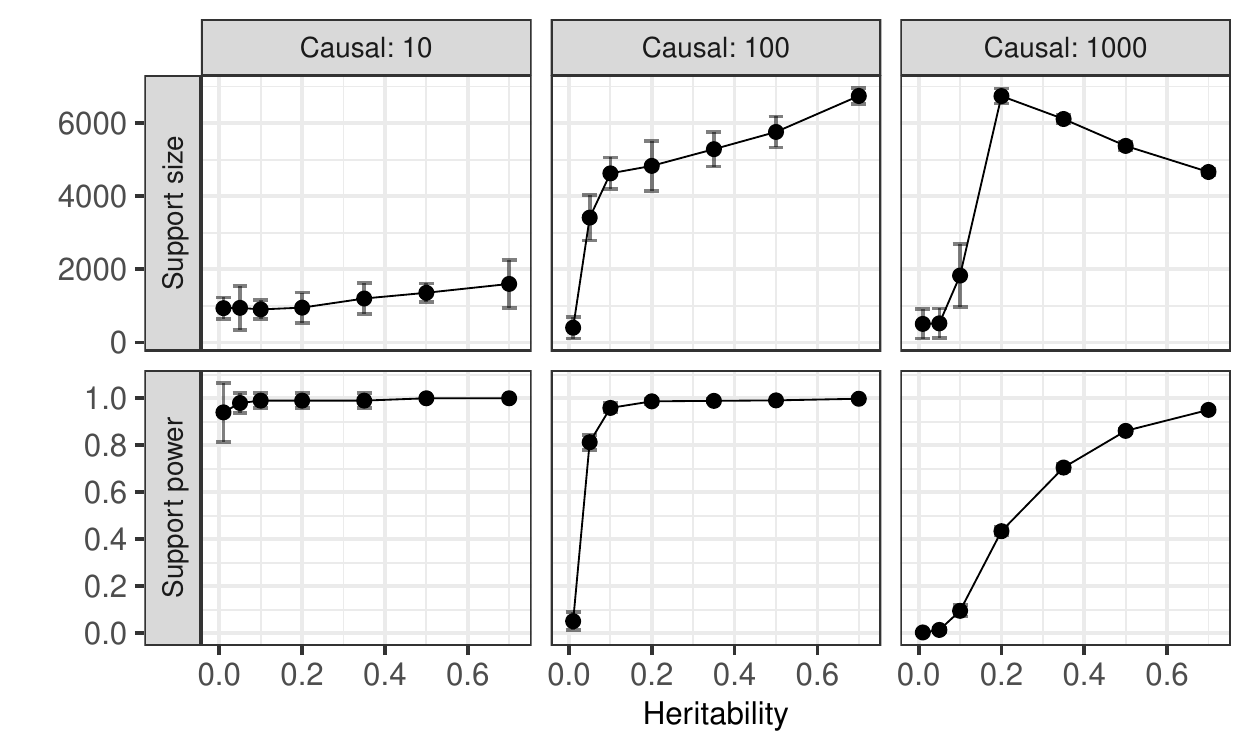}
  \caption{Properties of the regression coefficients $\hat{\beta}$ in the full-genome simulations with binary trait. Error bars show one standard deviation.}
  \label{fig:exp_large_binary_beta}
\end{figure}

As an added diagnostic, we plot the p-values from the Local Digital Twin Test and confirm that they are uniformly distributed in Figure~\ref{fig:exp_large_continuous_1Mb_histogram}. We confirm that the null p-values are super-uniform in distribution, and further that this holds even for the nulls that are identified as most promising by the external GWAS.\footnote{
	The parameter $\epsilon$ should be set to the value $10^{-8}$, the mutation rate for humans \citep{Acuna-Hidalgo2016}. We observed that 
	when we simulated offspring with $\epsilon = 0$ and then used a larger value (such as $\epsilon = 10^{-6}$) in the digital twin sampling, the null p-values had a non-uniform distribution.
}
\begin{figure}
  \centering
  \includegraphics[]{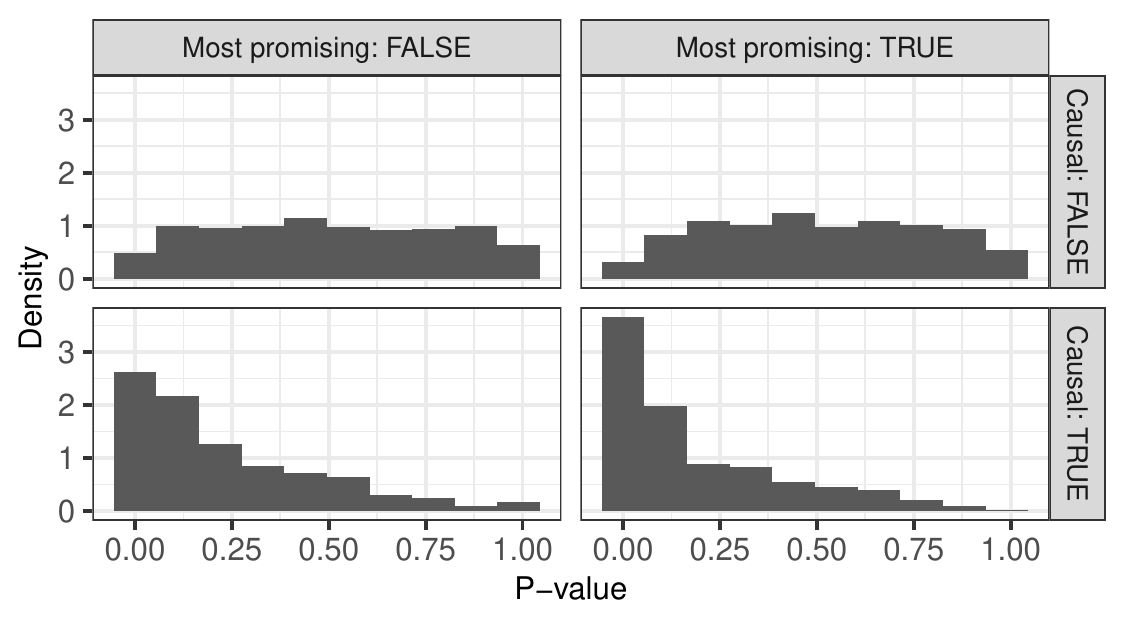}
  \caption{Histogram of null and non-null p-values computed by the
    local Digital Twin Test for a simulated binary trait with 100
    causal variables. The null p-values are stratified based on
    whether they correspond to genetic groups among the 200 most
    promising ones as inferred from the analysis of independent GWAS
    data. The total heritability of the simulated trait is 70\%.}
  \label{fig:exp_large_continuous_1Mb_histogram}
\end{figure}

Lastly, for the experiment in Section~\ref{sec:exp_tdt_loc}, we use the Selective SeqStep procedure \cite{barber2015} to combine the group p-values into a final set of discoveries. This procedure has one parameter, the threshold below which we reject p-values, which we set at $c = 0.5$.

\subsection{Robustness to computational phasing}
\label{sec:sim_phasing_robustness}

In practice, the Local Digital Twin Test relies on computationally
phased haplotypes, and we now verify that it is robust to imperfect
phasing. We carry out a simulation in the setting of
Section~\ref{sec:global_null_dtt_sim}, except that we split the chromosome into 30 groups of size approximately 2 Mb, and then apply the Local Digital Twin Test exactly as in Section~\ref{sec:sims_large_scale}. Note that in our synthetic population, we perfectly know the phase of both the parents and offspring.
To check our robustness to phasing, we consider either using (i) the true haplotypes or (ii) computationally phased haplotypes. For the latter, we input the trio {\em genotypes} to the \texttt{SHAPEIT} software \citep{Delaneau2012} (which accounts for the pedigree structure \citep{OConnell2014}) to obtain the computationally phased haplotypes. Lastly, we combine the group p-values using either the accumulation test \citep{li2017}, as in Section~\ref{sec:sims_large_scale}, or using the Selective SeqStep procedure \citep{barber2015} with parameter $c = 0.5$.

The results are reported in Figure~\ref{fig:sim_comp_phase}. We find that the power and FDR when using the computationally phased haplotypes are essentially identical to those when using the true haplotypes: the FDR is controlled at the nominal level while retaining the same power.

\begin{figure}[H]
  \centering
  \includegraphics[width = 6.5in]{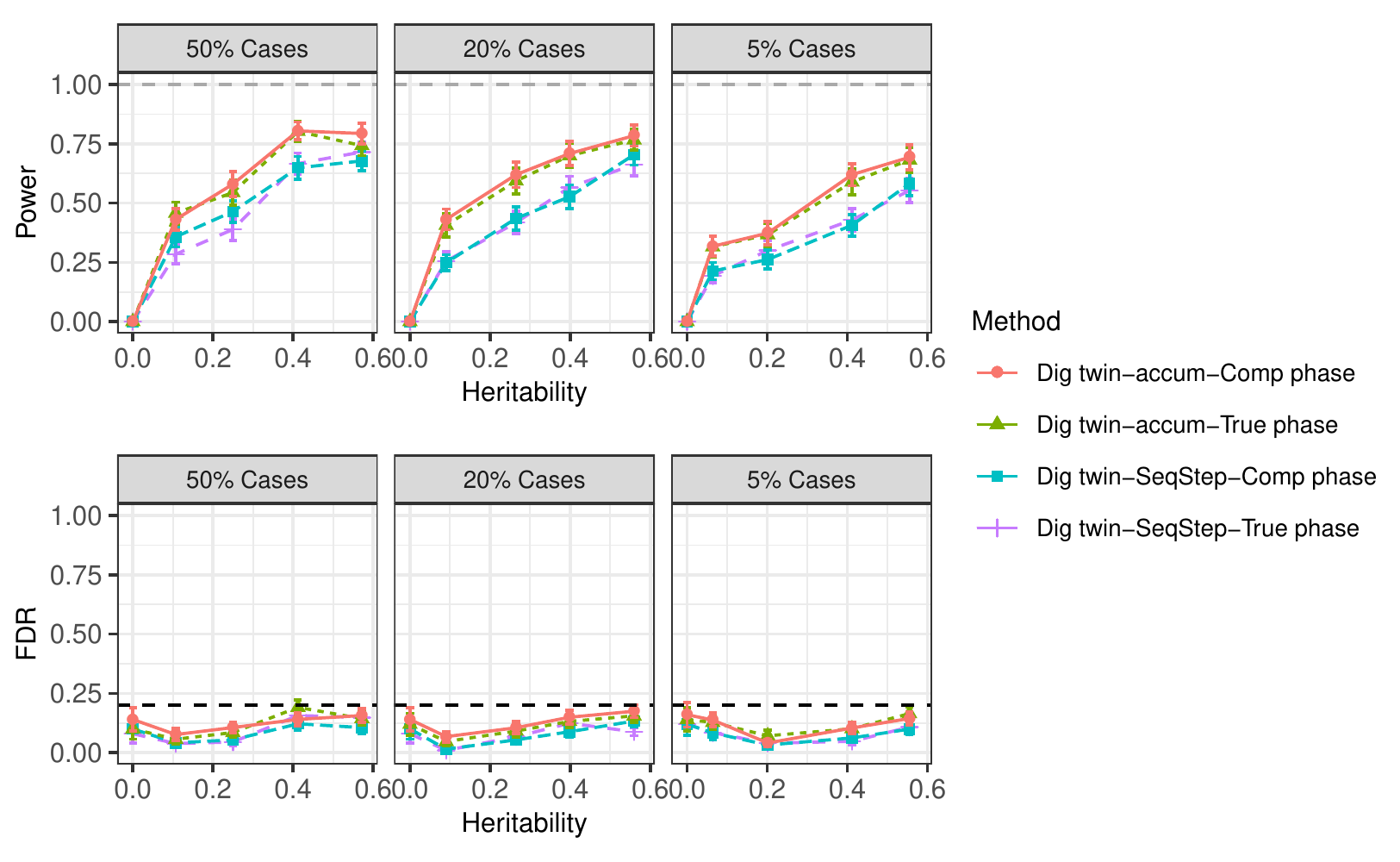}
  \caption{A comparison of the Local Digital Twin Test with perfect phasing to that with computational phasing. The solid grey line indicates the nominal FDR threshold.}
  \label{fig:sim_comp_phase}
\end{figure}

\subsection{Full-genome simulation with a continuous response}
\label{app:sim_full_gen_cts}

We carry out a simulation study analogous to that of Section~\ref{sec:sims_large_scale}, now with a continuous trait. As a benchmark, we substitute the QFAM family-based association test \cite{plink,fulker1999combined,abecasis2000general,purcell2007}, for the TDT, but otherwise proceed in an identical manner. We report the results in Figure~\ref{fig:exp_large_cts_causal}. In this case, the Digital Twin Test has much lower power than the QFAM-based benchmarks, except for high values of heritability, and all procedures empirically control the FDR.

\begin{figure}
  \centering
  \includegraphics[width=\linewidth]{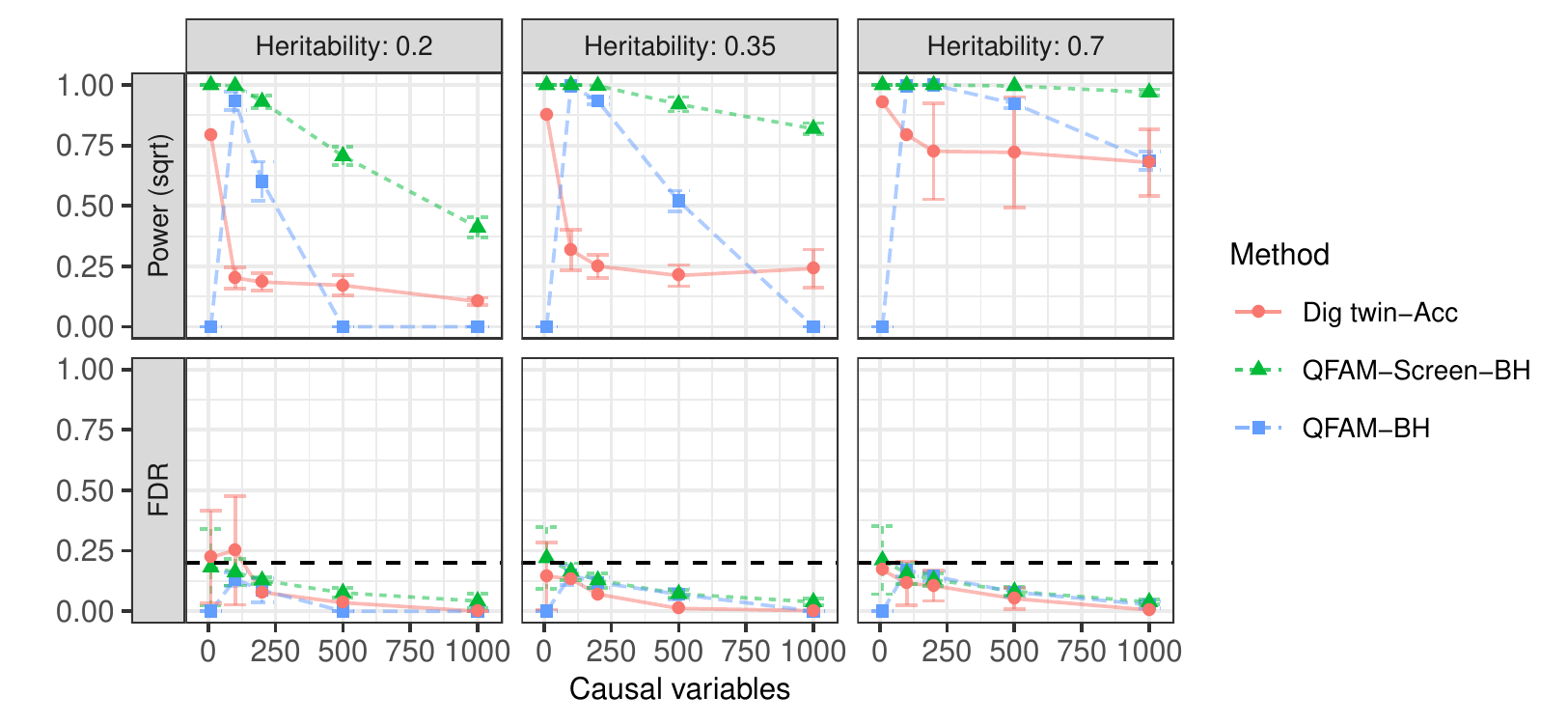}
  \caption{FDR and power of the Local Digital Twin Test in a large full-genome simulation with a continuous trait, compared to the performance of the QFAM procedure. Other details as in Figure~\ref{fig:exp_large_binary_causal}.}
  \label{fig:exp_large_cts_causal}
\end{figure}

\begin{figure}
  \centering
  \includegraphics[width=0.75\linewidth]{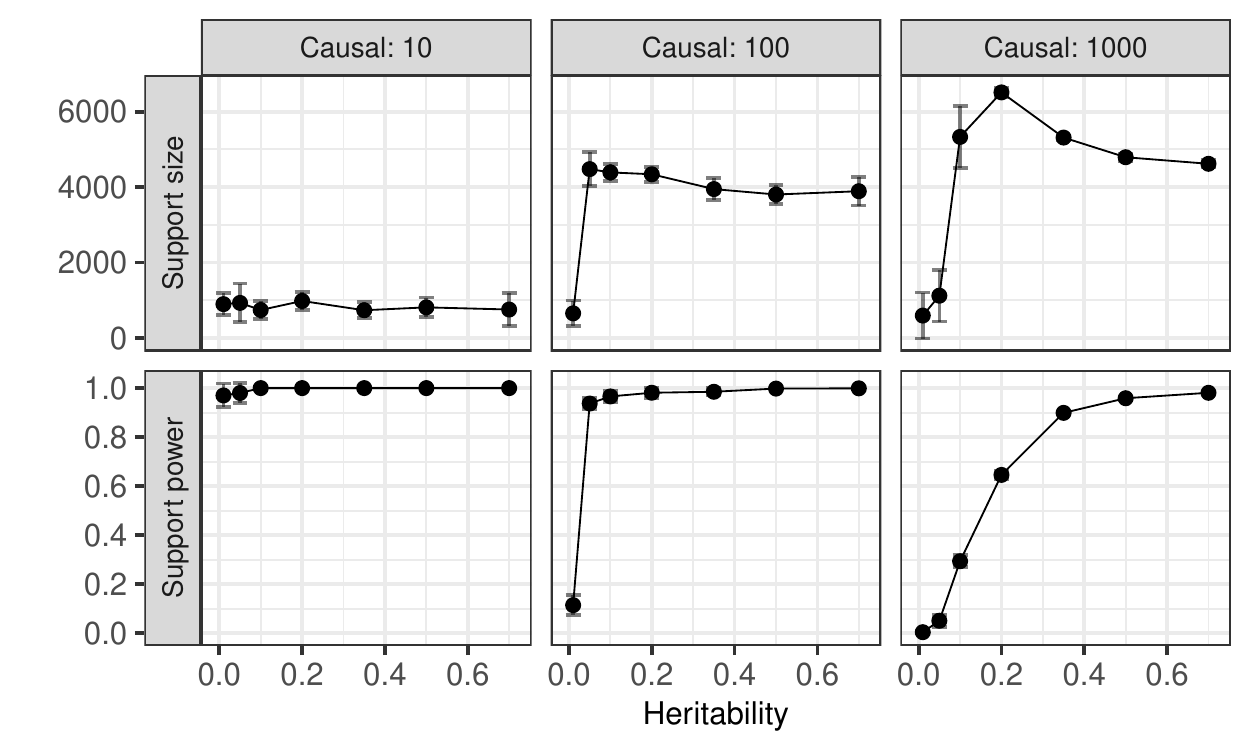}
  \caption{Properties of the regression coefficients $\hat{\beta}$ in the full-genome simulations with continuous trait. Other details as in Figure~\ref{fig:exp_large_binary_beta}.}
  \label{fig:exp_large_cts_beta}
\end{figure}

\end{document}